\newcommand{\abs}[1]{\left| #1 \right|}
\newcommand{\okra}[1]{\left( #1 \right)}
\newcommand{\kwad}[1]{\left[ #1 \right]}
\newcommand{\klam}[1]{\left\{ #1 \right\}}
\newcommand{\floor}[1]{\left\lfloor #1 \right\rfloor}
\newcommand{\boole}[1]{{\bf 1}{\klam{#1}}}
\DeclareMathOperator{\sred}{\mathbf{E}_\mathit{P}}
\DeclareMathOperator*{\argmax}{arg\, max}
\DeclareMathOperator*{\hilberg}{hilb}
\newtheorem{definition}{Definition}
\newtheorem{theorem}{Theorem}
\newenvironment*{proof}{\begin{trivlist}\item[]
    \noindent\textbf{Proof:}}{$\Box$\par\end{trivlist}}
\newenvironment*{proof*}[1]{\begin{trivlist}\item[]
    \noindent\textbf{Proof of #1:}}{$\Box$\par\end{trivlist}}
\begin{document}

\begin{titlepage}

  \title{Bounds for Algorithmic Mutual Information \\ and a Unifilar
    Order Estimator}

  \author{{\L}ukasz
    D\k{e}bowski\thanks{
      {\L}. D\k{e}bowski is with the Institute of Computer Science,
      Polish Academy of Sciences, ul. Jana Kazimierza 5, 01-248
      Warszawa, Poland (e-mail: ldebowsk@ipipan.waw.pl).}}  \date{}

  \maketitle

  \begin{abstract}
    Inspired by Hilberg's hypothesis, which states that mutual
    information between blocks for natural language grows like a power
    law, we seek for links between power-law growth rate of
    algorithmic mutual information and of some estimator of the
    unifilar order, i.e., the number of hidden states in the
    generating stationary ergodic source in its minimal unifilar
    hidden Markov representation.  We consider an order estimator
    which returns the smallest order for which the maximum likelihood
    is larger than a weakly penalized universal probability. This
    order estimator is intractable and follows the ideas by Merhav,
    Gutman, and Ziv (1989) and by Ziv and Merhav (1992) but in its
    exact form seems overlooked despite some nice theoretical
    properties. In particular, we can prove both strong consistency of
    this order estimator and an upper bound of algorithmic mutual
    information in terms of it. Using both results, we show that all
    (also uncomputable) sources of a finite unifilar order exhibit
    sub-power-law growth of algorithmic mutual information and of the
    unifilar order estimator.  In contrast, we also exhibit an example
    of unifilar processes of a countably infinite order, with a
    deterministic pushdown automaton and an algorithmically random
    oracle, for which the mentioned two quantities grow as a power law
    with the same exponent. We also relate our results to natural
    language research.
    \\[2ex]
    \textbf{Keywords:}
    algorithmic mutual information; unifilar hidden Markov processes;
    universal coding; order estimation; power laws
  \end{abstract}
  

\end{titlepage}

\pagestyle{plain}


\section{Introduction}
\label{secIntro}

Let $K(w)$ be the prefix-free Kolmogorov complexity and let
$J(u;w):=K(u)+K(w)-K(u,w)$ be the algorithmic mutual information
for strings $u$ and $w$ \cite{Gacs74en2,Chaitin75,LiVitanyi08}
in contrast to Shannon entropy $H(X)$ and mutual information
$I(X;Y):=H(X)+H(Y)-H(X,Y)$ for random variables $X$ and $Y$. Although
Kolmogorov complexity is in general uncomputable, there is a
hypothesis stemming from the paper by Hilberg \cite{Hilberg90}, see
also \cite{EbelingNicolis91,EbelingPoschel94,BialekNemenmanTishby01b,
  CrutchfieldFeldman03}, that the algorithmic mutual information
$J(x_1^n;x_{n+1}^{2n})$ between two blocks of a text in natural
language grows roughly like a power of the block length,
\begin{align}
  \label{Hilberg}
  J(x_1^n;x_{n+1}^{2n}) \propto n^\beta, \quad \beta>0.
\end{align}
(Notation $x_j^k$ denotes string $x_jx_{j+1}...x_k$.)  When
extrapolated to infinite texts $(x_i)_{i\in\mathbb{Z}}$ and assuming
that a computable probability model for natural language exists (which
need not be obvious), Hilberg's hypothesis implies that the respective
excess entropy of the computable model, i.e., the Shannon mutual
information between the infinite past and future
\cite{CrutchfieldFeldman03}, is infinite. Consequently, natural
language could not be optimally modeled by computable finite-state
hidden Markov processes, since they have finite excess entropy by the
data-processing inequality.  Recent progress of neural language
models, such as much publicized GPT-2 \cite{RadfordOther19} and GPT-3
\cite{BrownOthers20}, corroborates this suboptimality of finite-state
language models.  As for Hilberg's hypothesis, quite suggestive upper
bounds for the power-law growth of mutual information and some partial
evidence for divergent excess entropy can be also provided by recent
large scale computational experiments
\cite{TakahiraOthers16,HahnFutrell19,BravermanOther19,KaplanOther20}.

We have been interested in Hilberg's hypothesis long before the advent
of these computational experiments and, over years, we have formulated
a mathematical explanation thereof that links abstract semantics,
ergodic decomposition, algorithmic randomness, and information theory
\cite{Debowski06,Debowski11b,Debowski18}---for the most detailed
exposition see our book \cite{Debowski21}. This paper subscribes to
this research line making further connections and solving an open
problem stated in book \cite{Debowski21}. The aims are fourfold:
\begin{itemize}
\item The main goal of the present paper is to derive an upper bound
  for algorithmic mutual information in terms of a strongly consistent
  estimator of the unifilar order, i.e., the number of hidden states
  in the generating source in its minimal unifilar hidden Markov
  representation. What is interesting, this bound holds true not only
  for sources of a finite unifilar order but also for other stationary
  ergodic sources, where the estimator diverges to infinity or the
  process distribution is uncomputable. A similar bound for
  algorithmic mutual information but in terms of a probably
  inconsistent estimator of the Markov order, i.e., the minimal length
  of a sufficient context in the generating source, was derived in
  \cite{Debowski18}. Another upper bound for mutual information in
  terms of the number of distinct non-terminals in the shortest
  grammar-based compression
  \cite{DeMarcken96,KiefferYang00,CharikarOthers05} was demonstrated
  in \cite{Debowski11b} and linked there to Herdan-Heaps' law of
  power-law growth of vocabulary for natural language
  \cite{KuraszkiewiczLukaszewicz51en, Guiraud54,Herdan64,Heaps78}.
\item The unifilar order estimator which we apply is a modification of
  estimators of the Markov order and the hidden Markov order proposed
  by Merhav, Gutman, and Ziv \cite{MerhavGutmanZiv89} and by Ziv and
  Merhav \cite{ZivMerhav92} respectively. As in
  \cite{MerhavGutmanZiv89,ZivMerhav92}, the estimator returns the
  smallest order for which the maximum likelihood is larger than a
  penalized universal probability.  But what is interesting, our
  penalty is sublinear rather than linear which results in both no
  underestimation and no overestimation, whereas the estimators of
  \cite{MerhavGutmanZiv89,ZivMerhav92} tend to underestimate.  In the
  literature of Markov order estimation
  \cite{MerhavGutmanZiv89,CsiszarShields00,
    Csiszar02,MorvaiWeiss05,PeresShields05,DaleviDubhashi05,RyabkoAstola06,
    CsiszarTalata06,Talata13,BaigorriGocalvesResende14,RyabkoAstolaMalyutov16,
    PapapetrouKugiumtzis16} this kind of sublinear penalty can be
  traced in
  \cite{PeresShields05,RyabkoAstola06,RyabkoAstolaMalyutov16}, whereas
  it seems to have been overlooked in the literature of hidden Markov
  order estimation
  \cite{Finesso90,ZivMerhav92,WeinbergerLempelZiv92,Kieffer93,
    WeinbergerFeder94,LiuNarayan94,GassiatBoucheron03,Lehericy19,
    ShaliziShaliziCrutchfield03,ZhengHuangTong19} although most of
  these papers entertain quite similar ideas and prove strong
  consistency of related estimators.  Thus, proving strong consistency
  of our unifilar order estimator is another goal of this paper. This
  turns out to be the most technically complex achievement thereof
  since we apply theory of asymptotically mean stationary channels
  \cite{GrayKieffer80,FontanaGrayKieffer81,KiefferRahe81}. Since our
  estimator requires computing exact maximum likelihood and normalized
  maximum likelihood, it is intractable and impractical. We discuss it
  only because it has the desired theoretical property of being both
  strongly consistent and bounding algorithmic mutual information from
  above. Some practical estimators of the hidden Markov order can be
  found in \cite{Lehericy19,ZhengHuangTong19}.
\item The third goal of the paper is to exhibit clear examples of
  processes for which algorithmic mutual information and the unifilar
  order estimator grow slower than a power law (first class) or at
  least as fast as a power law (second class). By the consistency of
  the unifilar order estimator and the upper bound for algorithmic
  mutual information, all (also uncomputable) sources of a finite
  unifilar order belong to the first class. In contrast, in the second
  class we can find all perigraphic processes defined in
  \cite{Debowski18}. The defining property of perigraphic processes is
  that they describe a fixed algorithmically random infinite sequence
  effectively at a power-law rate.  As a corollary of our separation,
  the classes of finite-state hidden Markov processes and of
  perigraphic processes are disjoint, which solves one of open
  problems stated in the conclusion of our book \cite{Debowski21}.
\item Looking for simple examples of perigraphic processes, we find
  them in the class of sources of a countably infinite unifilar
  order. Our examples are called Oracle processes. The Oracle
  processes, introduced in this paper, constitute an encoding of
  perigraphic Santa Fe processes introduced in
  \cite{Debowski09,Debowski11b} into a finite alphabet but much
  simpler than discussed in \cite{Debowski12} and additionally
  satisfying the condition of unifilarity. In a nutshell, Oracle
  processes repeatedly emit random binary strings $y$, which uniquely
  represent natural numbers $\phi(y)$, followed by a comma symbol and
  by the corresponding bit $z_{\phi(y)}$ read off from an
  algorithmically random sequence $(z_k)_{k\in\mathbb{N}}$.
  Realizations of these processes can be recognized by a deterministic
  pushdown automaton with sequence $(z_k)_{k\in\mathbb{N}}$ put on an
  oracle.  Oracle processes achieve an arbitrarily large rate of
  power-law growth of algorithmic mutual information. We also show
  that our unifilar order estimator grows at the same rate for these
  sources, so in this case the bounds for algorithmic mutual
  information are tight.
\end{itemize}

Showing that finite-state hidden Markov processes cannot be
perigraphic sheds another beam of light onto the old discussion of
inadequacy of finite-state models in theoretical linguistics started
by Skinner and Chomsky
\cite{Skinner57,Chomsky56,Chomsky57,Chomsky59,ChomskyMiller59}. Chomsky's
idea was that context-free syntax of natural language is incompatible
with finite-state models advocated by Skinner. In contrast, our idea
is that human language serves for efficient description of a
potentially unboundedly complex reality in a repetitive way, therefore
a reasonable ergodic language model should be perigraphic. Whereas
mathematical connections between context-free syntax and
perigraphicness are an open problem, yet not only Hilberg's hypothesis
but also Herdan-Heaps' law of power-law growth of vocabulary
\cite{KuraszkiewiczLukaszewicz51en,Guiraud54,Herdan64,Heaps78} are
corollaries of perigraphicness
\cite{Debowski11b,DeMarcken96,KiefferYang00,CharikarOthers05}.

The topics discussed in this paper have also connections with standard
topics in information theory such as universal codes and distributions
in the style of Ryabko \cite{Ryabko88en2,Ryabko08}, normalized maximum
likelihood \cite{Shtarkov87en2,Grunwald07}, and asymptotically mean
stationary sources and channels
\cite{GrayKieffer80,FontanaGrayKieffer81,KiefferRahe81}. Consequently,
there may be natural extensions of presented ideas to model selection
for other finite-parameter families \cite{Grunwald07} but we decidedly
focus on the unifilar hidden Markov family.  To be honest, the present
paper provides constructions analogical to our preprint
\cite{Debowski20c}, which discussed a strongly consistent estimator of
the Markov order and its links with mutual information. That
manuscript was rejected in peer review as not novel enough and we have
abandoned it. We mention it merely since it resembles somewhat this
article and it cannot be withdrawn from ArXiv. Whereas the guiding
ideas are similar in both papers, the negative review inspired a few
simplifying generalizations, a few more technical steps, and many more
overt comments in this paper.

This paper adopts a linear progression of exposition, without
appendices. The first three further sections are short and discuss
preliminaries. Section~\ref{secEntities} sets up familiar concepts:
the unifilar hidden Markov family, respective (normalized) maximum
likelihood, and the Ryabko mixture. Section~\ref{secBounds} names
several obvious bounds for these, including a uniform bound for
statistical complexity.  Section~\ref{secUniversal} proves
universality of the respective Ryabko mixture, which seems also of a
preliminary character. In contrast, Section~\ref{secOrder} contains a
novel proof of strong consistency of the unifilar order
estimator. Section \ref{secMI} discusses lower and upper bounds for
power-law growth of mutual information. Section~\ref{secOracle}
contains the definition of Oracle processes and a few calculations for
them. Throughout the paper, $\ln x$ denotes the natural logarithm, in
contrast to the binary logarithm $\log x$.  $\sred X:=\int X dP$ is
the expectation.

\section{Basic entities}
\label{secEntities}

The goal of this section is to introduce familiar concepts such as the
unifilar hidden Markov family, respective (normalized) maximum
likelihood, and the Ryabko mixture---recast in our notation.  To
begin, we will consider a family of unifilar hidden Markov chains
where the number $k=1,2,3,...$ of hidden states is finite and the
emitted symbols belong to a fixed finite alphabet
$\mathbb{X}$. Unifilarity means that the hidden Markov chain is
deterministic in the automata sense, i.e., the next hidden state is a
fixed function of the previous hidden state and the previous emitted
symbol. That is, for a given sequence of symbols $x_1^n$ and states
$y_1^n$, we consider a family of unifilar hidden Markov distributions
\begin{align}
  \label{UnifilarHidden}
  \mathbb{P}(x_1^n,y_1^n|k,\pi,\tau,\varepsilon)
  &:=\pi(y_1)\varepsilon(x_1|y_1)
  \prod_{i=2}^{n}\boole{y_i=\tau(y_{i-1},x_{i-1})}\varepsilon(x_i|y_i),
\end{align}
where $\pi:\klam{1,..,k}\to[0,1]$ with $\sum_y \pi(y)=1$ is the
initial hidden state distribution,
$\tau:\klam{1,..,k}\times\mathbb{X}\to\klam{1,..,k}$ is the transition
table, and $\varepsilon:\mathbb{X}\times\klam{1,..,k}\to[0,1]$ with
$\sum_x \varepsilon(x|y)=1$ is the emission matrix.

Since so defined probabilities are prequential, i.e.,
\begin{align}
  \sum_{x_{n+1},y_{n+1}}\mathbb{P}(x_1^{n+1},y_1^{n+1}|k,\pi,\tau,\varepsilon)=
  \mathbb{P}(x_1^n,y_1^n|k,\pi,\tau,\varepsilon),
\end{align}
we can consider a joint stochastic process $(X_i,Y_i)_{i\in\mathbb{N}}$
distributed according to
\begin{align}
  P(X_1^n=x_1^n,Y_1^n=y_1^n)=
  \mathbb{P}(x_1^n,y_1^n|k,\pi,\tau,\varepsilon).
\end{align}
This process is stationary and extendable to a stationary process
$(X_i,Y_i)_{i\in\mathbb{Z}}$ if
\begin{align}
  \label{Stationary}
  \sum_{x_1,y_1}
  \pi(y_1)\varepsilon(x_1|y_1)\boole{y_2=\tau(y_1,x_1)}=\pi(y_2).
\end{align}
In any case, we define the marginal distribution
\begin{align}
  \mathbb{P}(x_1^n|k,\pi,\tau,\varepsilon):=
  \sum_{y_1^n} \mathbb{P}(x_1^n,y_1^n|k,\pi,\tau,\varepsilon)
\end{align}
and the conditional distribution
\begin{align}
  \mathbb{P}(x_1^n|k,y_1,\tau,\varepsilon):=
  \frac{1}{\pi(y_1)}\sum_{y_2^n}
  \mathbb{P}(x_1^n,y_1^n|k,\pi,\tau,\varepsilon)
  .
\end{align}

Subsequently, we define three distributions of the shape well-known
in the minimum description length theory \cite{Grunwald07}: the
maximum likelihood (ML)
\begin{align}
  \hat{\mathbb{P}}(x_1^n|k)
  &:=\max_{y,\tau,\varepsilon}
    \mathbb{P}(x_1^n|k,y,\tau,\varepsilon),
\end{align}
the normalized maximum likelihood (NML) in the spirit of Shtarkov
\cite{Shtarkov87en2}
\begin{align}
  \mathbb{P}(x_1^n|k)
  &:=
    \frac{\hat{\mathbb{P}}(x_1^n|k)}{\sum_{z_1^n\in\mathbb{X}^n}
    \hat{\mathbb{P}}(z_1^n|k)}
    \le \hat{\mathbb{P}}(x_1^n|k),
\end{align}
and the Bayesian mixture in the spirit of Ryabko \cite{Ryabko88en2,Ryabko08}
\begin{align}
  \mathbb{P}(x_1^n)
  &:=\sum_{k=1}^\infty w_k \mathbb{P}(x_1^n|k)
    ,
    \quad
    w_k:=\frac{1}{k}-\frac{1}{k+1}.
\end{align}

We notice that the maximum likelihood $\hat{\mathbb{P}}(x_1^n|k)$, the
normalized maximum likelihood $\mathbb{P}(x_1^n|k)$, and the Ryabko
mixture $\mathbb{P}(x_1^n)$ are not prequential. Moreover, the maximum
likelihood satisfies $\hat{\mathbb{P}}(x_1^n|k)=1$ for $k\ge n$, since
having as many hidden states as the string length we can put
$\pi(1)=1$, $\tau(i,x_i)=i+1$, and
$\varepsilon(x_i|i)=1$. Consequently, the normalized maximum
likelihood equals $\mathbb{P}(x_1^n|k)=\abs{\mathbb{X}}^{-n}$ for
$k\ge n$ and the Ryabko mixture $\mathbb{P}(x_1^n)$ is a computable
function of $x_1^n$ since the defining infinite series can be
truncated.  We stress that the maximum likelihood, the normalized
maximum likelihood, and the Ryabko mixture are computable in the sense
of computability theory, which will suffice for our needs of bounding
algorithmic mutual information, but they are computationally
intractable since we need to perform exhaustive search over all
transition tables $\tau$ combined with summation over exponentially
growing domains $\mathbb{X}^n$.

\section{Simple bounds}
\label{secBounds}

In this section, we will continue preliminaries and we will discuss
some simple inequalities satisfied by the maximum likelihood, the
normalized likelihood, and the Ryabko mixture. The proofs of these
facts can be easily reconstructed or found in book \cite{Grunwald07}.
Let us introduce the maximizer
\begin{align}
    \mathbb{G}(x_1^n)&:=\argmax_{k\ge 1}\mathbb{P}(x_1^n|k).
\end{align}
As explained in \cite{Grunwald07}, we observe the sandwich bound for
the Ryabko mixture 
\begin{align}
  \label{SandwichRyabko}
  -\log \mathbb{P}(x_1^n|\mathbb{G}(x_1^n))
  &\le -\log \mathbb{P}(x_1^n)
  \le -\log \mathbb{P}(x_1^n|k)-\log w_k.
\end{align}
Moreover, it is easy to see that the maximum log-likelihood is
subadditive,
\begin{align}
  \label{SubadditiveML}
  -\log \hat{\mathbb{P}}(x_1^n|k)
  -\log \hat{\mathbb{P}}(x_{n+1}^{n+m}|k)
  + \log \hat{\mathbb{P}}(x_1^{n+m}|k)
  &\le 0.
\end{align}
The above property probably was not discussed in \cite{Grunwald07}. It
follows by the particular form of family (\ref{UnifilarHidden}) and
will be used for the upper bound of algorithmic mutual information in
Section~\ref{secMI}.

Subsequently, like in \cite{Grunwald07}, we introduce the statistical
complexity of the unifilar hidden Markov family
\begin{align}
  \mathbb{C}(n|k):=
  -\log \mathbb{P}(x_1^n|k)
  +\log \hat{\mathbb{P}}(x_1^n|k)=
  \log{\sum_{z_1^n\in\mathbb{X}^n}
  \hat{\mathbb{P}}(z_1^n|k)}\le
  n\log\abs{\mathbb{X}}.
\end{align}
It is a rule of thumb that the statistical complexity of a
distribution family with exactly $k$ real parameters is roughly
$k\log n$ and there exist more exact expressions assuming some
particular conditions \cite{Grunwald07}. Here we only need a very
rough bound for $\mathbb{C}(n|k)$ but assuming that we have not only a
real-parameter emission matrix $\varepsilon$ but also an
integer-parameter transition table $\tau$. Let us denote the set of
distinct maximum likelihood parameters
\begin{align}
  \mathcal{P}_{kn}
  &:=\klam{(y,\tau,\varepsilon):
    \exists_{x_1^n} \mathbb{P}(x_1^n|k,y,\tau,\varepsilon)
    =\hat{\mathbb{P}}(x_1^n|k)}
    .
\end{align}
As explained in \cite{Grunwald07}, we can bound
\begin{align}
  \mathbb{C}(n|k)
  &=
    \log\okra{
    \sum_{x_1^n\in\mathbb{X}^n} \hat{\mathbb{P}}(x_1^n|k)
    }
    \le \log \abs{\mathcal{P}_{kn}}
    .
\end{align}
Given fixed $(y,\tau)$, likelihood
$\mathbb{P}(x_1^n|k,y,\tau,\varepsilon)$ is maximized for the
empirical distribution
\begin{align}
  \varepsilon(b|a)=
  \frac{\sum_{i=1}^n\boole{(y_i,x_i)=(a,b)}}{\sum_{i=1}^n\boole{y_i=a}}
  ,
\end{align}
where $y_1=y$ and $y_i=\tau(y_{i-1},x_{i-1})$ for $i\ge 2$.  Since
there are $k\cdot k^{k\abs{\mathbb{X}}}$ possible values of pairs
$(y,\tau)$ and given $(y,\tau)$ there are less than
$(n+1)^{k\abs{\mathbb{X}}}$ distinct empirical distributions
$\varepsilon$, we can bound the statistical complexity of the unifilar
hidden Markov family as
\begin{align}
  \label{ComplexityBound}
  \mathbb{C}(n|k)  \le \log \abs{\mathcal{P}_{kn}}
  \le
  \log \okra{
  k^{k\abs{\mathbb{X}}+1}
  (n+1)^{k\abs{\mathbb{X}}}
  }
  \le
  [k\abs{\mathbb{X}}+1]\log [k(n+1)]
  .
\end{align}
We can observe a small correction up to the mentioned rule of thumb.

\section{Universality}
\label{secUniversal}

The last relatively preliminary fact that we will present is
universality of the Ryabko mixture.  For distribution families that
contain Markov chain distributions of all orders and whose statistical
complexity of each order $k$ grows sublinearly with the sample size
$n$, the Ryabko mixture is a universal distribution by a reasoning
following the ideas of papers \cite{Ryabko88en2,Ryabko08}. It turns
out that this is the case for the unifilar hidden Markov family. For
completeness, we will present the full reasoning but we do not claim
originality of idea.  Also, as we have mentioned, this particular
Ryabko mixture is computable in the sense of computability theory but
it is intractable and highly impractical as a universal compression
procedure. We need it only for further theoretical applications.

Let $H(X):=\sred\kwad{-\log P(X)}$ be the entropy of
random variable $X$ and let $H(X|Y):=\sred\kwad{-\log P(X|Y)}$ be the
conditional entropy of $X$ given random variable $Y$.  Let
$(X_i)_{i\in\mathbb{Z}}$ be a stationary ergodic process over alphabet
$\mathbb{X}$.  We denote the conditional entropies
\begin{align}
  h_k^P:=H(X_0|X_{-k}^{-1})=\sred\kwad{-\log P(X_0|X_{-k}^{-1})}
\end{align}
and the entropy rate
\begin{align}
  h^P&:=\lim_{n\to\infty} \frac{H(X_1^n)}{n}=\inf_{k\ge 1} h_k^P=H(X_0|X_{-\infty}^{-1}).
\end{align}
The following theorem states universality of the Ryabko mixture.
\begin{theorem}
  \label{theoUniversal}
  For a stationary ergodic process $(X_i)_{i\in\mathbb{Z}}$ over
  alphabet $\mathbb{X}$, 
  \begin{align}
    \label{UniversalRyabko}
    \lim_{n\to\infty} \frac{1}{n}\kwad{-\log
    \mathbb{P}(X_1^n)}
    &= h^P \text{ $P$-a.s.},
    \\
    \label{UniversalRyabkoExp}
    \lim_{n\to\infty} \frac{1}{n}\sred\kwad{-\log
    \mathbb{P}(X_1^n)}
    &= h^P.
\end{align}
\end{theorem}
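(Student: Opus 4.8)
The plan is to establish matching almost-sure lower and upper bounds on $\frac{1}{n}\kwad{-\log\mathbb{P}(X_1^n)}$ and then transfer them to expectations. The basic observation is that for each fixed $n$ the Ryabko mixture $\mathbb{P}(\cdot)$ is a bona fide probability distribution on $\mathbb{X}^n$, since each normalized maximum likelihood $\mathbb{P}(\cdot|k)$ sums to one over $\mathbb{X}^n$ and the weights telescope, $\sum_{k\ge 1}w_k=1$. For the lower bound I would combine this with the Shannon--McMillan--Breiman theorem. By Markov's inequality and $\sum_{x_1^n}\mathbb{P}(x_1^n)=1$,
\begin{align*}
  P\okra{\mathbb{P}(X_1^n)/P(X_1^n) > e^{n\varepsilon}} \le e^{-n\varepsilon},
\end{align*}
which is summable in $n$, so Borel--Cantelli gives $\frac{1}{n}\log\kwad{\mathbb{P}(X_1^n)/P(X_1^n)}\le\varepsilon$ eventually for each $\varepsilon>0$. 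Since $\frac{1}{n}\kwad{-\log P(X_1^n)}\to h^P$ almost surely, letting $\varepsilon\downarrow 0$ yields $\liminf_n\frac{1}{n}\kwad{-\log\mathbb{P}(X_1^n)}\ge h^P$ almost surely.

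For the almost-sure upper bound I would use the right half of the sandwich bound (\ref{SandwichRyabko}), $-\log\mathbb{P}(X_1^n)\le-\log\hat{\mathbb{P}}(X_1^n|k)+\mathbb{C}(n|k)-\log w_k$, specialized to $k=\abs{\mathbb{X}}^m$ for a fixed order $m$. The crucial embedding is that the $m$-th order Markov model is itself a unifilar hidden Markov chain: let the hidden state at time $i$ record the block $x_{i-m}^{i-1}$, so that the transition table $\tau$ is the deterministic shift and the emission matrix $\varepsilon(\cdot|y)$ is the order-$m$ conditional law. Hence $\hat{\mathbb{P}}(X_1^n|\abs{\mathbb{X}}^m)$ dominates the maximum likelihood of the order-$m$ Markov family, whose normalized negative log-likelihood is the empirical order-$m$ conditional entropy and converges almost surely to $h_m^P=H(X_0|X_{-m}^{-1})$ by the ergodic theorem (boundary effects from the first $m$ symbols being $O(1)$). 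By (\ref{ComplexityBound}), $\frac{1}{n}\mathbb{C}(n|\abs{\mathbb{X}}^m)=O(\log n/n)\to 0$ and likewise $\frac{1}{n}\kwad{-\log w_k}\to 0$ for fixed $m$. Thus $\limsup_n\frac{1}{n}\kwad{-\log\mathbb{P}(X_1^n)}\le h_m^P$ almost surely for every $m$, and letting $m\to\infty$ with $h_m^P\downarrow h^P$ proves (\ref{UniversalRyabko}).

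For the expectation statement (\ref{UniversalRyabkoExp}), the lower bound is immediate from Gibbs' inequality, $\sred\kwad{-\log\mathbb{P}(X_1^n)}\ge H(X_1^n)$, combined with $H(X_1^n)/n\to h^P$. For the upper bound I would invoke bounded convergence: since the maximum likelihood equals $1$ and the normalized maximum likelihood is uniform for $k\ge n$, one has $\mathbb{P}(X_1^n)\ge w_n\abs{\mathbb{X}}^{-n}$, which gives the uniform bound $\frac{1}{n}\kwad{-\log\mathbb{P}(X_1^n)}\le\log\abs{\mathbb{X}}+O(\log n/n)$; the bounded sequence then converges in expectation to its almost-sure limit $h^P$.

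The step I expect to be most delicate is the almost-sure upper bound: correctly realizing the order-$m$ Markov model inside family (\ref{UnifilarHidden}) so that $\hat{\mathbb{P}}(X_1^n|\abs{\mathbb{X}}^m)$ genuinely dominates its likelihood, verifying convergence of the maximized order-$m$ log-likelihood to $h_m^P$ via the ergodic theorem, and justifying the iterated limits $n\to\infty$ followed by $m\to\infty$. This interchange is legitimate precisely because the sublinear complexity bound (\ref{ComplexityBound}) forces the penalty term to vanish for each fixed $m$ before $m$ is sent to infinity.
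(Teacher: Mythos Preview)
Your proof is correct and follows essentially the same route as the paper: the almost-sure lower bound via a Barron-type ratio inequality (which the paper simply cites as \cite[Theorem~3.1]{Barron85b}), the upper bound via the embedding of order-$m$ Markov chains into the unifilar family with $\abs{\mathbb{X}}^m$ states together with the ergodic theorem and the complexity bound (\ref{ComplexityBound}), and the passage to expectations via dominated convergence using $\mathbb{P}(X_1^n)\ge w_n\abs{\mathbb{X}}^{-n}$. The only cosmetic difference is that the paper plugs the \emph{true} order-$m$ conditional law $P(X_i\mid X_{i-m}^{i-1})$ directly into $\varepsilon$ so that Birkhoff applies immediately to $-\log P(X_i\mid X_{i-m}^{i-1})$, whereas you bound by the Markov maximum likelihood and invoke convergence of the empirical conditional entropy; both reach $h_m^P$ and the remaining steps coincide.
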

\begin{proof}
  Letting $\tau(x_1^k,x_{k+1})=x_2^{k+1}$ and
  $\varepsilon(x_{k+1}|x_1^k)=P(X_{k+1}=x_{k+1}|X_1^k=x_1^k)$, we
  obtain the conditional probability bound
  \begin{align}
    \hat{\mathbb{P}}(X_1^n|\abs{\mathbb{X}}^k)\ge
    \mathbb{P}(X_1^n|\abs{\mathbb{X}}^k,X_{-k+1}^0,\tau,\varepsilon)=
    \prod_{i=1}^n P(X_i|X_{i-k}^{i-1})
    .
  \end{align}
  Hence by the upper bound (\ref{ComplexityBound}) for the statistical
  complexity and the Birkhoff ergodic theorem, we obtain
  \begin{align}
    & \limsup_{n\to\infty} \frac{1}{n}\kwad{-\log
      \mathbb{P}(X_1^n|\abs{\mathbb{X}}^k)}
      \nonumber\\
    &\qquad =
      \limsup_{n\to\infty}
      \frac{1}{n}\kwad{-\log
      \hat{\mathbb{P}}(X_1^n|\abs{\mathbb{X}}^k)+\mathbb{C}(n|\abs{\mathbb{X}}^k)}
      \le h_k^P \text{ $P$-a.s. }
  \end{align}
  Thus by the upper bound in (\ref{SandwichRyabko}) and the Barron
  lemma \cite[Theorem 3.1]{Barron85b}, we obtain
  (\ref{UniversalRyabko}). Noticing that
  $\mathbb{P}(X_1^n)\ge w_n
  \mathbb{P}(X_1^n|n)=w_n\abs{\mathbb{X}}^{-n}$, we hence obtain
  (\ref{UniversalRyabkoExp}) by dominated convergence.
\end{proof}

\section{Order estimation}
\label{secOrder}

We are moving on to more difficult reasonings.  In this section, we
will construct a simple unifilar order estimator and we will prove
that it is strongly consistent and asymptotically unbiased.  The
estimator is intractable but it leads to simple bounds for algorithmic
mutual information to be discussed in Section~\ref{secMI}.  To prove
its strong consistency we will apply universality of the Ryabko
mixture demonstrated in Section~\ref{secUniversal}.

Let $(X_i)_{i\in\mathbb{Z}}$ be a stationary ergodic process over
alphabet $\mathbb{X}$ as before. We define the unifilar order of the
process as
\begin{align}
 M^P
  &:=\inf\klam{k:\exists_{\pi,\tau,\varepsilon}\forall_{n\ge 1,x_1^n}
    P(X_1^n=x_1^n)=
    \mathbb{P}(x_1^n|k,\pi,\tau,\varepsilon)}
\end{align}
with the convention that the infimum of the empty set is infinite.
Subsequently, we will consider a unifilar order estimator which is a
certain modification of estimators of the Markov order and the hidden
Markov order proposed by Merhav, Gutman, and Ziv
\cite{MerhavGutmanZiv89} and by Ziv and Merhav \cite{ZivMerhav92}
respectively. The idea of \cite{MerhavGutmanZiv89,ZivMerhav92} is that
the estimator returns the smallest order for which the maximum
likelihood is larger than a penalized universal probability.
Consequently, we will define the unifilar order estimator
\begin{align}
  \label{Estimator}
  \mathbb{M}(x_1^n)
  &:=\min\klam{k:
    \hat{\mathbb{P}}(x_1^n|k)\ge
    w_n\mathbb{P}(x_1^n)}
    ,
    \quad
    w_n:=\frac{1}{n}-\frac{1}{n+1}
    .
\end{align}
We can see that the estimator is nicely bounded by
$\mathbb{M}(x_1^n)\le n$ since $\hat{\mathbb{P}}(x_1^n|k)=1$ for
$k\ge n$. In contrast to the hidden Markov order estimator by Ziv and
Merhav, we use much smaller penalty $-\log w_n=o(n)$, whereas penalty
$-\log w_n=\lambda n$ was used in \cite{ZivMerhav92}. As we will see,
our choice of $w_n$ results in strong consistency whereas the
estimator by \cite{ZivMerhav92} suffered from underestimation.

In the literature of Markov order estimation
\cite{MerhavGutmanZiv89,CsiszarShields00,
  Csiszar02,MorvaiWeiss05,PeresShields05,DaleviDubhashi05,RyabkoAstola06,
  CsiszarTalata06,Talata13,BaigorriGocalvesResende14,RyabkoAstolaMalyutov16,
  PapapetrouKugiumtzis16}, a kind of sublinear penalty
$-\log w_n=o(n)$ in estimators resembling (\ref{Estimator}) can be
traced in \cite{PeresShields05,RyabkoAstola06,RyabkoAstolaMalyutov16},
whereas it seems overlooked in variants of estimator (\ref{Estimator})
in the literature of hidden Markov order estimation
\cite{Finesso90,ZivMerhav92,WeinbergerLempelZiv92,Kieffer93,
  WeinbergerFeder94,LiuNarayan94,GassiatBoucheron03,Lehericy19,
  ShaliziShaliziCrutchfield03,ZhengHuangTong19} although majority of
these papers treat quite similar ideas and prove strong consistency of
related estimators.  Penalty $-\log w_n=o(n)$ was also considered in
the rejected paper \cite{Debowski20c} for Markov order estimation. In
this case, minus logarithm of maximum likelihood equals the empirical
conditional entropy and can be quickly computed.

In contrast, the unifilar order estimator (\ref{Estimator}) is
computable in the sense of computability theory but it is intractable
since it applies exact maximum likelihood and normalized maximum
likelihood. We will need it as is for the next section since it yields
the most elegant upper bound for the algorithmic mutual
information. Ignoring the question of obtaining this bound for a
while, we note that we can make the estimator somewhat computationally
simpler while preserving strong consistency if we replace universal
distribution $\mathbb{P}(x_1^n)$ with a simpler universal compression
procedure such as the Lempel-Ziv code \cite{ZivLempel77}. This idea
was proposed by Merhav, Gutman, and Ziv \cite{MerhavGutmanZiv89} and
by Ziv and Merhav \cite{ZivMerhav92} themselves. This substitution,
however, breaks the simple upper bound for mutual information while
not solving the problem of computing the exact maximum likelihood,
which requires combinatorial optimization over all transition tables
$\tau$.  In contrast, some practical estimators of the hidden Markov
order can be found in \cite{Lehericy19,ZhengHuangTong19}.

The following theorem states strong consistency of the unifilar order
estimator. The proof technique for the impossibility of overestimation
can be considered to be transferred from Markov order estimation proof
ideas such as \cite{RyabkoAstola06,RyabkoAstolaMalyutov16}. We suppose
that the proof of the impossibility of underestimation is more
original. Since we apply a result about asymptotically mean stationary
channels by Kieffer and Rahe \cite{KiefferRahe81}, we suspected that
Kieffer \cite{Kieffer93} might have used a similar technique in the
context of hidden Markov order estimation but we did not find it
there.
\begin{theorem}
  \label{theoConsistent}
  For a stationary ergodic process $(X_i)_{i\in\mathbb{Z}}$ over
  alphabet $\mathbb{X}$,
 \begin{align}
   \lim_{n\to\infty} \mathbb{M}(X_1^n) &=M^P \text{ $P$-a.s.}
 \end{align}
 and we have the overestimation bound
 $P\okra{\mathbb{M}(X_1^n)>M^P}\le w_n$.
\end{theorem}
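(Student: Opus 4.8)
The plan is to split the claim into the impossibility of overestimation, which simultaneously yields the quantitative bound, and the impossibility of underestimation. For no overestimation, suppose first $M^P=m<\infty$ and fix parameters $(\pi,\tau,\varepsilon)$ realizing $P$. Since the maximum likelihood dominates any single likelihood and $\max_{y_1}\mathbb{P}(x_1^n|m,y_1,\tau,\varepsilon)\ge\sum_{y_1}\pi(y_1)\mathbb{P}(x_1^n|m,y_1,\tau,\varepsilon)=P(X_1^n=x_1^n)$, we get $\hat{\mathbb{P}}(x_1^n|m)\ge P(x_1^n)$ for every string. Hence on the event $\klam{\mathbb{M}(X_1^n)>m}$ the defining inequality of (\ref{Estimator}) fails at $k=m$, so $P(X_1^n)\le\hat{\mathbb{P}}(X_1^n|m)<w_n\mathbb{P}(X_1^n)$. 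A Markov-type summation then gives
\begin{align}
  P\okra{\mathbb{M}(X_1^n)>M^P}
  \le \sum_{x_1^n:\,P(x_1^n)<w_n\mathbb{P}(x_1^n)} P(x_1^n)
  \le w_n\sum_{x_1^n}\mathbb{P}(x_1^n)=w_n,
\end{align}
using $\sum_{x_1^n}\mathbb{P}(x_1^n)=1$; the case $M^P=\infty$ is vacuous since $\mathbb{M}\le n$. This is exactly the stated bound, and since $\sum_n w_n=1<\infty$, Borel--Cantelli yields $\mathbb{M}(X_1^n)\le M^P$ eventually $P$-a.s.

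For no underestimation it suffices to show that for every fixed $k<M^P$ the inequality in (\ref{Estimator}) fails eventually $P$-a.s.; intersecting over the (finitely many, if $M^P<\infty$) such $k$ then forces $\mathbb{M}(X_1^n)\ge M^P$ eventually, and when $M^P=\infty$ the same statement gives $\mathbb{M}(X_1^n)\to\infty$. Taking logarithms, the inequality fails as soon as
\begin{align}
  \frac{1}{n}\kwad{-\log\hat{\mathbb{P}}(X_1^n|k)}
  > \frac{-\log w_n}{n}+\frac{1}{n}\kwad{-\log\mathbb{P}(X_1^n)}.
\end{align}
Here $-\log w_n=\log\okra{n(n+1)}=o(n)$, and the last term converges to $h^P$ $P$-a.s. by Theorem~\ref{theoUniversal}, so the right-hand side tends to $h^P$. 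It therefore remains to keep the left-hand side strictly above $h^P$ in the limit.

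To analyze the left-hand side I would fix the initial state $y$ and transition table $\tau$ and recall that the optimal emission matrix is the empirical conditional distribution of the symbol given the induced state, so that $\frac{1}{n}\kwad{-\log\hat{\mathbb{P}}(X_1^n|k)}=\min_{y,\tau}\hat H_n(X\mid Y^{y,\tau})$, a minimum over finitely many automata of the empirical conditional entropy of the emission given the deterministically filtered state sequence $Y_i=\tau(Y_{i-1},X_{i-1})$. The hard step is to prove that each such empirical conditional entropy converges $P$-a.s. This is where the asymptotically mean stationary machinery enters: the automaton $\tau$ is a deterministic stationary channel, so the pair process $(X_i,Y_i)$ is asymptotically mean stationary, and by the ergodic theorem for such channels in the style of Kieffer and Rahe \cite{KiefferRahe81} the empirical joint type of $(Y_i,X_i)$ converges; since the alphabet $\klam{1,..,k}\times\mathbb{X}$ is finite, the empirical conditional entropy converges to $\bar H(X_0\mid Y_0^\tau)$ taken under the stationary mean $\bar P$. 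As the minimum is over finitely many $\tau$ it commutes with the limit, giving $\frac{1}{n}\kwad{-\log\hat{\mathbb{P}}(X_1^n|k)}\to\bar h_k:=\min_\tau\bar H(X_0\mid Y_0^\tau)$.

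Finally I would establish strictness. Since $Y_0$ is $\sigma(X_{-\infty}^{-1})$-measurable under $\bar P$, the chain rule gives, for the minimizing $\tau$, $\bar H(X_0\mid Y_0)-h^P=I(X_0;X_{-\infty}^{-1}\mid Y_0)\ge 0$, so $\bar h_k\ge h^P$ always. If equality held for some $k<M^P$, then $X_0\ind X_{-\infty}^{-1}\mid Y_0$ under $\bar P$, i.e.\ $P(X_0=\cdot\mid X_{-\infty}^{-1})=\bar P(X_0=\cdot\mid Y_0)$; combined with the unifilar update $Y_1=\tau(Y_0,X_0)$ this lets one read off an exact $k$-state representation with $\varepsilon(x|y)=\bar P(X_0=x\mid Y_0=y)$ and stationary $\pi$, forcing $M^P\le k$, a contradiction. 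Hence $\bar h_k>h^P$ for $k<M^P$, the left-hand side exceeds the right-hand side for all large $n$, and the two directions together give consistency. The main obstacle is the almost-sure convergence of the filtered empirical conditional entropy, since the finite automaton may have unbounded memory and the pair process is only asymptotically mean stationary.
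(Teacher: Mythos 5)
Your overall architecture coincides with the paper's: overestimation is handled by a Markov-type argument giving the bound $w_n$ plus Borel--Cantelli (your direct summation over $\klam{x_1^n : P(x_1^n) < w_n\mathbb{P}(x_1^n)}$ is exactly the proof of the Barron lemma that the paper cites, and it is correct, as is the reduction via $\hat{\mathbb{P}}(x_1^n|M^P)\ge P(x_1^n)$), and underestimation is reduced, via universality of the Ryabko mixture and $-\log w_n=o(n)$, to showing $\liminf_n \frac{1}{n}\kwad{-\log\hat{\mathbb{P}}(X_1^n|k)}>h^P$ for $k<M^P$, which you attack just as the paper does: finite minimum over $(y,\tau)$, Kieffer--Rahe asymptotic mean stationarity, and convergence of empirical counts to the stationary mean.

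However, your strictness step contains a genuine gap. You assert that $\bar Y_0$ is $\sigma(X_{-\infty}^{-1})$-measurable under the stationary mean $\bar P$, and you use this twice: to write $\bar H(X_0|\bar Y_0)-h^P = I(X_0;X_{-\infty}^{-1}|\bar Y_0)$, and to convert the equality case into $P(X_0=\cdot|X_{-\infty}^{-1})=\bar P(X_0=\cdot|\bar Y_0)$ and read off a $k$-state representation. This measurability claim is false in general, and the paper explicitly warns against it with a counterexample: a constant process $(X_i)$ driving a periodic automaton, for which $\bar Y_0$ is spread uniformly over the cycle under the stationary mean and is not a function of the $X$-past. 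Without measurability, your chain-rule identity is unjustified (it needs $H(X_0|X_{-\infty}^{-1},\bar Y_0)=H(X_0|X_{-\infty}^{-1})$), so even the inequality $\bar h_k\ge h^P$ does not follow from your argument; and in the equality case, conditional independence of $X_0$ from the $X$-past alone given $\bar Y_0$ does not yield the HMM factorization, because unrolling $P(X_1^n|\bar Y_1)=\prod_i P(X_i|X_1^{i-1},\bar Y_1^i)$ conditions on the initial state $\bar Y_1$, which is not recoverable from the $X$'s. What is actually needed is $I(X_j;X_{-\infty}^{j-1},\bar Y_{-\infty}^{j-1}|\bar Y_j)=0$, i.e., independence from the \emph{joint} past. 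The paper's proof supplies precisely this: using the recursion $\bar Y_{i+1}=\tau(\bar Y_i,X_i)$ and stationarity of the stationary mean, it derives by a Ces\`aro argument the identity $H(X_i|\bar Y_i) = h^P + I(X_j;X_{-\infty}^{j-1},\bar Y_{-\infty}^{j-1}|\bar Y_j)$, which delivers both $\bar h_k\ge h^P$ and the equality-case reconstruction (hence $M^P\le k$) in one stroke. You would need to replace your chain-rule step by this joint-past identity, or an equivalent, for the proof to go through.
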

\begin{proof}
  Our proof is split into impossibility of overestimation and of
  underestimation.  The bound for the overestimation probability is
  received by inequality
  $\hat{\mathbb{P}}(X_1^n|M^P)\ge P(X_1^n|Y_1)$, where $Y_1$
  is the hidden state emitting $X_1$, and by the Barron lemma
  \cite[Theorem 3.1]{Barron85b}. Hence
  \begin{align}
    P\okra{\mathbb{M}(X_1^n)>M^P}
    &\le
      P\okra{\hat{\mathbb{P}}(X_1^n|M^P)< w_n\mathbb{P}(X_1^n)}
      \nonumber\\
    &\le
      P\okra{\frac{w_n\mathbb{P}(X_1^n)}{P(X_1^n|Y_1)}>1}
      \le
      w_n.
  \end{align}
  Since $\sum_{n=1}^\infty w_n=1$, the impossibility of overestimation
  follows by the Borel-Cantelli lemma.

  Now, we demonstrate the impossibility of underestimation, which is
  much more difficult to see. Since the Ryabko mixture was shown to be
  universal in the sense of (\ref{UniversalRyabko}) and the penalty is
  $-\log w_n=o(n)$, it is sufficient to show that
  \begin{align}
    \liminf_{n\to\infty} \frac{1}{n}\kwad{-\log
    \hat{\mathbb{P}}(X_1^n|k)}
    &
      > h^P \text{ $P$-a.s. for } k<M^P.
  \end{align}  
  Our reasoning will go by showing that the left hand side of the
  above inequality equals almost surely a sort of conditional entropy
  $h_{[k]}^P$ which is strictly greater than $h^P$ if $k<M^P$.
  
  We observe first that for any finite set $\mathcal{M}$,
  \begin{align}
    \liminf_{n\to\infty} \min_{m\in\mathcal{M}} a_{nm}
    =
    \min_{m\in\mathcal{M}} \liminf_{n\to\infty} a_{nm}.
  \end{align}
  It is so since we can take sufficiently large $n$ on both sides to
  interchange the infimums. In our case, the set of pairs $(y,\tau)$
  for a fixed $k$ is finite.  Hence
  \begin{align}
    \liminf_{n\to\infty} \min_{y,\tau,\varepsilon}
    \frac{1}{n}\kwad{-\log
    \mathbb{P}(X_1^n|k,y,\tau,\varepsilon)}
    =
    \min_{y,\tau}
    \liminf_{n\to\infty} \min_{\varepsilon}
    \frac{1}{n}\kwad{-\log
    \mathbb{P}(X_1^n|k,y,\tau,\varepsilon)}.
  \end{align}


  Now we will apply technically quite a difficult but beautiful result
  by Kieffer and Rahe \cite{KiefferRahe81}, which says that an ergodic
  Markov channel applied to an ergodic asymptotically mean stationary
  process yields a jointly ergodic asymptotically mean stationary
  process. Denote $Y^{y,\tau}_1:=y\in\klam{1,...,k}$ and
  $Y^{y,\tau}_{i+1}:=\tau(Y^{y,\tau}_i,X_i)$. We can see that the
  distribution of $(Y^{y,\tau}_i)_{i\in\mathbb{N}}$ given
  $(X_i)_{i\in\mathbb{N}}$ is an ergodic Markov channel, whereas
  process $(X_i)_{i\in\mathbb{N}}$ is stationary ergodic. Thus process
  $(X_i,Y^{y,\tau}_i)_{i\in\mathbb{N}}$ is asymptotically mean
  stationary ergodic. Let process
  $(\bar X_i,\bar Y^{y,\tau}_i)_{i\in\mathbb{Z}}$ be distributed
  according to the stationary mean of
  $(X_i,Y^{y,\tau}_i)_{i\in\mathbb{N}}$.  Since
  $(X_i)_{i\in\mathbb{N}}$ is stationary, we can assume without loss
  of generality that
  $(\bar X_i)_{i\in\mathbb{N}}=(X_i)_{i\in\mathbb{N}}$.  Moreover, by
  definition of the stationary mean, recursion
  $\bar Y_{i+1}^{y,\tau}=\tau(\bar Y^{y,\tau}_i,X_i)$ holds by
  recursion $Y^{y,\tau}_{i+1}=\tau(Y^{y,\tau}_i,X_i)$.  (Notice,
  however, that we cannot assume
  $\bar Y^{y,\tau}_i=\sigma_{y,\tau}(X_{-\infty}^{i-1})$ since there
  is a simple counterexample: a periodic process
  $(Y^{y,\tau}_i)_{i\in\mathbb{N}}$ with a constant process
  $(X_i)_{i\in\mathbb{N}}$.)

  The beauty of asymptotically mean stationary processes lies in the
  fact that we have a generalization of the Birkhoff ergodic theorem
  \cite{GrayKieffer80}. The claim is that the Ces\`aro averages
  converge almost surely to expectations with respect to the
  stationary mean. Hence, by the application of the Birkhoff ergodic
  theorem to empirical counts in the most likely distribution
  $\varepsilon$ given $(y,\tau)$, we obtain
  \begin{align}
    &
      \min_{y,\tau}
      \liminf_{n\to\infty} \min_{\varepsilon}
      \frac{1}{n}\kwad{-\log
      \mathbb{P}(X_1^n|k,y,\tau,\varepsilon)}
      \nonumber\\
    &\quad =
      \min_{y,\tau} \liminf_{n\to\infty} \min_{\varepsilon}
      \frac{1}{n}\sum_{i=1}^n\kwad{-\log \varepsilon(X_i|Y^{y,\tau}_i)}
      \nonumber\\
    &\quad =
      h_{[k]}^P:=\min_{y,\tau}
      \sred
      \kwad{-\log P(X_i|\bar Y^{y,\tau}_i)} \text{ $P$-a.s.}
  \end{align}

  Let $y$ and $\tau$ be some minimizing parameters and let us
  abbreviate $\bar Y_i:=\bar Y^{y,\tau}_i$. What happens if
  $h_{[k]}^P:=H(X_i|\bar Y_i)=h^P$? Since
  $\bar Y_{i+1}=\tau(\bar Y_i,X_i)$, we can write
  \begin{align}
    H(X_i|\bar Y_i)-H(X_i|X_1^{i-1},\bar Y_1)
    &=
      I(X_i;X_1^{i-1},\bar Y_1^{i-1}|\bar Y_i).
  \end{align}
  Hence by stationarity of $(X_i,\bar Y^{y,\tau}_i)_{i\in\mathbb{Z}}$, 
  \begin{align}
    \sum_{i=1}^n H(X_i|\bar Y_i)
    = H(X_1^n|\bar Y_1)+
    \sum_{i=0}^{n-1} I(X_j;X_{j-i}^{j-1},\bar Y_{j-i}^{j-1}|\bar Y_j).
  \end{align}
  Dividing by $n$ and letting $n\to\infty$ yields
  \begin{align}
      h_{[k]}^P:=H(X_i|\bar Y_i)= h^P+
    I(X_j;X_{-\infty}^{j-1},\bar Y_{-\infty}^{j-1}|\bar Y_j),
  \end{align}
  where we freely apply Shannon information measures for arbitrary,
  also infinite $\sigma$-fields, whose properties were described in
  \cite{Wyner78,Debowski20}.  That is, if $h_{[k]}^P=h^P$ then
  $I(X_i;X_{-\infty}^{i-1},\bar Y_{-\infty}^{j-1}|\bar Y_i)=0$. Since also
  $\bar Y_{i+1}=\tau(\bar Y_i,X_i)$ then $(X_i)_{i\in\mathbb{N}}$ is a
  unifilar hidden Markov process with $\le k$ hidden states
  distributed according to $\bar Y_i$. Consequently, we have
  $M^P\le k$.
\end{proof}

The above result implies that the estimator is asymptotically
unbiased, which we will need in the next section.
\begin{theorem}
  \label{theoUnbiased}
  For a stationary ergodic process $(X_i)_{i\in\mathbb{Z}}$ over
  alphabet $\mathbb{X}$,
  \begin{align}
    \lim_{n\to\infty} \sred \mathbb{M}(X_1^n) = M^P.
  \end{align}
\end{theorem}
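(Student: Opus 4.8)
The plan is to upgrade the almost sure convergence of Theorem~\ref{theoConsistent} to convergence in expectation, and the entire difficulty reduces to controlling the contribution of the rare overestimation event, where $\mathbb{M}(X_1^n)$ can be as large as $n$. I would split the argument into two cases according to whether the unifilar order $M^P$ is finite or infinite.

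Suppose first that $M^P=\infty$. Then Theorem~\ref{theoConsistent} gives $\mathbb{M}(X_1^n)\to\infty$ $P$-a.s., so $\liminf_{n\to\infty}\mathbb{M}(X_1^n)=\infty$ $P$-a.s. Since $\mathbb{M}(X_1^n)\ge 0$, Fatou's lemma yields
\begin{align}
  \liminf_{n\to\infty}\sred\mathbb{M}(X_1^n)
  \ge\sred\kwad{\liminf_{n\to\infty}\mathbb{M}(X_1^n)}=\infty,
\end{align}
whence $\sred\mathbb{M}(X_1^n)\to\infty=M^P$, as desired.

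Now suppose $M^P<\infty$. Here the estimator is bounded only by $\mathbb{M}(X_1^n)\le n$ and not by a fixed constant, so a.s.\ convergence alone is insufficient; I would decompose the expectation along the overestimation event. Write
\begin{align}
  \sred\mathbb{M}(X_1^n)
  =\sred\kwad{\mathbb{M}(X_1^n)\boole{\mathbb{M}(X_1^n)\le M^P}}
  +\sred\kwad{\mathbb{M}(X_1^n)\boole{\mathbb{M}(X_1^n)> M^P}}.
\end{align}
In the first summand the integrand is dominated by the constant $M^P$ and, by Theorem~\ref{theoConsistent}, converges $P$-a.s.\ to $M^P$ (eventually $\mathbb{M}(X_1^n)=M^P$), so bounded convergence shows this term tends to $M^P$. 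For the second summand I would use the crude bound $\mathbb{M}(X_1^n)\le n$ together with the overestimation estimate $P\okra{\mathbb{M}(X_1^n)>M^P}\le w_n$ of Theorem~\ref{theoConsistent}, obtaining
\begin{align}
  \sred\kwad{\mathbb{M}(X_1^n)\boole{\mathbb{M}(X_1^n)> M^P}}
  \le n\,P\okra{\mathbb{M}(X_1^n)>M^P}
  \le n\,w_n=\frac{1}{n+1}\to 0.
\end{align}
Adding the two limits gives $\sred\mathbb{M}(X_1^n)\to M^P$.

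The step I expect to carry the weight is this tail estimate: a priori the overestimation event could inflate the expectation because $\mathbb{M}(X_1^n)$ ranges up to $n$, and mere almost sure convergence provides no uniform integrability. The resolution is that the overestimation probability decays like $w_n=\tfrac{1}{n(n+1)}$, fast enough that even after multiplying by the worst-case value $n$ the contribution still vanishes. This is precisely where the quantitative overestimation bound of Theorem~\ref{theoConsistent}, rather than just its almost sure statement, becomes indispensable.
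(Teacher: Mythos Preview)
Your proof is correct and uses essentially the same ingredients as the paper's: the crude bound $\mathbb{M}(X_1^n)\le n$, the overestimation estimate $P(\mathbb{M}(X_1^n)>M^P)\le w_n$ giving a tail contribution of at most $n\,w_n=1/(n+1)$, and Fatou's lemma for the lower bound. The paper organizes it slightly more compactly by not splitting into cases---it writes the one-line upper bound $\sred\mathbb{M}(X_1^n)\le M^P+n\,w_n$ and invokes Fatou for $\liminf\sred\mathbb{M}(X_1^n)\ge M^P$, which covers $M^P=\infty$ automatically---whereas you use bounded convergence for the main term when $M^P<\infty$; but this is purely cosmetic.
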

\begin{proof}
  By $\mathbb{M}(x_1^n)\le n$ and by the overestimation bound in
  Theorem \ref{theoConsistent} we have
  \begin{align}
    \sred \mathbb{M}(X_1^n)\le M^P +
    nP(\mathbb{M}(x_1^n)>M^P)= M^P+\frac{1}{n+1}.
  \end{align}
  On the other hand, by the Fatou lemma,
  \begin{align}
    M^P=
    \sred \liminf_{n\to\infty} \mathbb{M}(X_1^n)\le
    \liminf_{n\to\infty} \sred \mathbb{M}(X_1^n).
  \end{align}
  Hence the claim follows.
\end{proof}


\section{Mutual information}
\label{secMI}

In this section, we will present the culmination of this paper,
namely, various bounds for the asymptotic power-law growth of
algorithmic mutual information and related quantities such as the
hidden Markov order estimator introduced in the previous section.  For
this aim we will apply the basic bounds presented in Section
\ref{secBounds}, universality of the Ryabko mixture from Section
\ref{secUniversal}, and strong consistency of the hidden Markov order
estimator from Section~\ref{secOrder}.

For the sake of further considerations concerning the power-law growth
of various quantities, let us introduce so called
Hilberg exponents
\begin{align}
  \hilberg_{n\to\infty} s(n):=
  \limsup_{n\to\infty}
  \frac{\log \max\klam{1,s(n)}}{\log n}
\end{align}
for real functions $s(n)$ of natural numbers, cf.\
\cite{Debowski15d,Debowski18,Debowski21}, where we gradually
approached the above definition.  The Hilberg exponents capture the
asymptotic power-law growth of the respective functions, such as
$\hilberg_{n\to\infty} n^\beta=\beta$ for $\beta\ge 0$.  We will begin
with a strengthening of a simple observation from
\cite{Debowski18,Debowski21}. Our improvement is also very simple and
it consists in replacing condition $\mathfrak{J}(n)\ge -C$ with
$\mathfrak{S}(n)-n\mathfrak{s}\ge -C$ as sufficient for equality of
the respective Hilberg exponents. It is curious that we have not
noticed this earlier.
\begin{theorem}[cf.\ \cite{Debowski18,Debowski21}]
  \label{theoHilbergRedundancy}
  For a function $\mathfrak{S}:\mathbb{N}\rightarrow\mathbb{R}$,
  define $\mathfrak{J}(n):=2\mathfrak{S}(n)-\mathfrak{S}(2n)$. If
  $\lim_{n\rightarrow\infty} \mathfrak{S}(n)/n=\mathfrak{s}$ for a
  $\mathfrak{s}\in\mathbb{R}$ then
\begin{align}
  \label{HilbergRedundancy}
  \hilberg_{n\rightarrow\infty} \okra{\mathfrak{S}(n)-n\mathfrak{s}}
  \le
  \hilberg_{n\rightarrow\infty} \mathfrak{J}(n)
\end{align}
with an equality if $\mathfrak{S}(n)-n\mathfrak{s}\ge -C$ for all but
finitely many $n$ and some $C>0$.
\end{theorem}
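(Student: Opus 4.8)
Write $\mathfrak{R}(n):=\mathfrak{S}(n)-n\mathfrak{s}$ for the redundancy, so that the hypothesis $\mathfrak{S}(n)/n\to\mathfrak{s}$ reads $\mathfrak{R}(n)=o(n)$, and a direct substitution gives $\mathfrak{J}(n)=2\mathfrak{R}(n)-\mathfrak{R}(2n)$. The plan is to reduce everything to a single telescoping identity. Rewriting $\mathfrak{R}(2m)=2\mathfrak{R}(m)-\mathfrak{J}(m)$ as $\mathfrak{R}(m)=\mathfrak{R}(2m)/2+\mathfrak{J}(m)/2$ and iterating, one obtains for every $j$ the exact relation $\mathfrak{R}(n)=2^{-j}\mathfrak{R}(2^j n)+\sum_{i=0}^{j-1}2^{-i-1}\mathfrak{J}(2^i n)$. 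Since $2^{-j}\mathfrak{R}(2^j n)=n\cdot\mathfrak{R}(2^j n)/(2^j n)\to 0$ as $j\to\infty$ by $\mathfrak{R}(m)=o(m)$, the partial sums converge and I obtain the clean representation
\begin{align}
  \mathfrak{R}(n)=\sum_{i=0}^{\infty} 2^{-i-1}\mathfrak{J}(2^i n).
\end{align}
Everything else is a matter of reading off Hilberg exponents from this formula.

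For the inequality (\ref{HilbergRedundancy}), abbreviate $\gamma:=\hilberg_{n\to\infty}\mathfrak{J}(n)$. If $\gamma\ge 1$ there is nothing to prove, because $\mathfrak{R}(n)=o(n)$ forces $\hilberg_{n\to\infty}\mathfrak{R}(n)\le 1\le\gamma$. So I may assume $\gamma<1$ and fix $\epsilon>0$ with $\gamma+\epsilon<1$. By definition of the Hilberg exponent there is $m_0$ with $\mathfrak{J}(m)\le m^{\gamma+\epsilon}$ for all $m\ge m_0$; bounding each summand in the representation for $n\ge m_0$ then yields $\mathfrak{R}(n)\le\sum_{i=0}^{\infty}2^{-i-1}(2^i n)^{\gamma+\epsilon}=\tfrac12 n^{\gamma+\epsilon}\sum_{i=0}^{\infty}2^{i(\gamma+\epsilon-1)}$, a convergent geometric series precisely because $\gamma+\epsilon<1$. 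Hence $\mathfrak{R}(n)\le C' n^{\gamma+\epsilon}$ for large $n$, which gives $\hilberg_{n\to\infty}\mathfrak{R}(n)\le\gamma+\epsilon$; letting $\epsilon\downarrow 0$ finishes this direction.

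For the equality under the extra hypothesis $\mathfrak{R}(n)\ge -C$ for all but finitely many $n$, I use the recursion in the opposite direction. From $\mathfrak{J}(n)=2\mathfrak{R}(n)-\mathfrak{R}(2n)$ and $\mathfrak{R}(2n)\ge -C$ one gets the one-line upper bound $\mathfrak{J}(n)\le 2\mathfrak{R}(n)+C$. Writing $\alpha:=\hilberg_{n\to\infty}\mathfrak{R}(n)$ and using $\mathfrak{R}(n)\le n^{\alpha+\epsilon}$ for large $n$, this gives $\mathfrak{J}(n)\le 2n^{\alpha+\epsilon}+C\le(2+C)n^{\alpha+\epsilon}$, hence $\hilberg_{n\to\infty}\mathfrak{J}(n)\le\alpha+\epsilon$ and, letting $\epsilon\downarrow 0$, the reverse inequality $\gamma\le\alpha$. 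This is exactly where the weakened hypothesis on $\mathfrak{S}(n)-n\mathfrak{s}$ (rather than on $\mathfrak{J}$) enters: a lower bound on $\mathfrak{R}$ is all that is needed to convert the recursion into an upper bound on $\mathfrak{J}$.

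I expect the main subtlety to be the bookkeeping around the $\max\klam{1,\cdot}$ truncation in the Hilberg exponent together with the fact that $\mathfrak{J}$ need not be sign-definite: the representation is an identity only as a convergent series, so each termwise bound $\mathfrak{J}(2^i n)\le(2^i n)^{\gamma+\epsilon}$ must hold simultaneously for all $i\ge 0$, which is guaranteed once $n\ge m_0$, and the elementary facts $\hilberg_{n\to\infty}(\cdot)\ge 0$ and $\hilberg_{n\to\infty}\mathfrak{R}(n)\le 1$ must be invoked to keep the $\limsup$ estimates clean. The genuinely load-bearing idea, however, is the split at $\gamma=1$: the geometric series converges only for $\gamma<1$, while the regime $\gamma\ge 1$ is disposed of for free by $\mathfrak{R}(n)=o(n)$.
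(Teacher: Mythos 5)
Your proof is correct and, on the part the paper actually writes out, essentially identical to it: your equality step (use $\mathfrak{R}(2n)\ge -C$ in the recursion to get $\mathfrak{J}(n)\le 2\mathfrak{R}(n)+C$ and compare exponents) is precisely the paper's one-line argument. For the inequality $\hilberg_{n\to\infty}\okra{\mathfrak{S}(n)-n\mathfrak{s}}\le\hilberg_{n\to\infty}\mathfrak{J}(n)$ the paper simply defers to its earlier references, and your dyadic telescoping identity $\mathfrak{R}(n)=\sum_{i=0}^{\infty}2^{-i-1}\mathfrak{J}(2^i n)$, together with the necessary case split at $\gamma\ge 1$ where the geometric series would diverge, is the standard argument used there, so the proposal matches the intended proof throughout.
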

\begin{proof}
  Write $\delta=\hilberg_{n\rightarrow\infty} \mathfrak{J}(n)$.
  The proof of $\hilberg_{n\rightarrow\infty}
  \okra{\mathfrak{S}(n)-n\mathfrak{s}}\le \delta$ can be found
  in \cite{Debowski18,Debowski21}.
  Now assume that $\mathfrak{S}(n)-n\mathfrak{s}\ge -C$ for all but
  finitely many $n$. We have then
  \begin{align}
    \mathfrak{S}(n)-n\mathfrak{s}=
    \frac{\mathfrak{J}(n)}{2}+\frac{\mathfrak{S}(2n)-2n\mathfrak{s}}{2}
    \ge \frac{\mathfrak{J}(n)-C}{2}
  \end{align}
  for sufficiently large $n$. Hence
  $\delta\le
  \hilberg_{n\rightarrow\infty}\okra{\mathfrak{S}(n)-n\mathfrak{s}}$.
  Thus we obtain the equality in (\ref{HilbergRedundancy}).
\end{proof}

Let $(X_i)_{i\in\mathbb{Z}}$ be a stationary ergodic process over
alphabet $\mathbb{X}$. As discussed in
\cite{CrutchfieldFeldman03,Debowski21}, we can equivalently define the
excess entropy 
\begin{align}
  E^P:=\lim_{n\to\infty} \kwad{H(X_1^n)-nh^P}=
  \lim_{n\to\infty}I(X_1^n;X_{n+1}^{2n}).
\end{align}
By the data-processing inequality for the Shannon mutual information,
we obtain $E^P\le\log M^P$.  That is, the excess entropy is finite for
finite-state hidden Markov processes (also non-unifilar ones). In the
next turn, we may ask how fast the defining expressions diverge to
infinity. By Theorem \ref{theoHilbergRedundancy} and
$I(X_1^n;X_{n+1}^{2n})=2H(X_1^n)-H(X_1^{2n})$ from stationarity, we
can define the Hilberg exponent
\begin{align}
  \beta^P_1:=
  \hilberg_{n\to\infty} \kwad{H(X_1^n)-nh^P}=
  \hilberg_{n\to\infty} I(X_1^n;X_{n+1}^{2n})\le 1.
\end{align}
Let us compare this result with Kolmogorov complexity and algorithmic
mutual information.  By the theorem of Brudno \cite{Brudno82}, the
prefix-free Kolmogorov complexity is the length of a universal code,
  \begin{align}
    \label{UniversalK}
    \lim_{n\to\infty} \frac{K(X_1^n)}{n}
    &= h^P \text{ $P$-a.s.},
    \\
    \label{UniversalKExp}
    \lim_{n\to\infty} \frac{\sred K(X_1^n)}{n}
    &= h^P.
\end{align}
Since $\sred K(X_1^n)\ge H(X_1^n)$ by the prefix-free property,
Theorem \ref{theoHilbergRedundancy} yields
\begin{align}
  \beta^P_1\le \beta^P_2:=
  \hilberg_{n\to\infty} \sred \kwad{K(X_1^n)-nh^P}=
  \hilberg_{n\to\infty} \sred J(X_1^n;X_{n+1}^{2n})\le 1.
\end{align}
Similarly, by universality of the Ryabko mixture proved in Theorem
\ref{theoUniversal} and inequality
\begin{align}
  K(x_1^n)\le -\log \mathbb{P}(X_1^n)-\log w_n+K(\mathbb{P})
\end{align}
from computability of the Ryabko mixture and Shannon-Fano coding,
we obtain
\begin{align}
  \beta^P_1\le \beta^P_2\le \beta^P_3
  &:=
    \hilberg_{n\to\infty} \sred\kwad{-\log \mathbb{P}(X_1^n)-nh^P}
    \nonumber\\
  &=
    \hilberg_{n\to\infty}
    \sred\kwad{-\log \mathbb{P}(X_1^n)-\log
    \mathbb{P}(X_{n+1}^{2n})+\log \mathbb{P}(X_1^{2n})}\le 1.
    \label{HilbergCompare}
\end{align}
The above bounds besides the second equality in (\ref{HilbergCompare})
(stated as inequality in in \cite{Debowski18}) were observed in the
context of universal coding in \cite{Debowski18}.

Consequently in \cite{Debowski18}, we have bounded Hilberg exponents
$\beta^P_1$, $\beta^P_2$, and $\beta^P_3$ in terms of so called
theorems about facts and words. These mathematical propositions state
that under some formalization, the number of independent binary facts
predictable from a finite text is roughly less than the mutual
information between two halves of the text and this is roughly less
than the number of distinct words detectable in the text. Namely, the
theorems about facts and words take form of inequalities
\begin{align}
  \label{DefaultFactsWords}
  \hilberg_{n\to\infty} \sred U(X_1^n)
  \le
  \beta^P_i
  \le
  \hilberg_{n\to\infty} \sred V(X_1^n),
\end{align}
where $U(x_1^n)$ is approximately the number of independent binary
facts effectively predictable from a fixed text $x_1^n$ and $V(x_1^n)$
is approximately the number of distinct words detectable in a fixed
text $x_1^n$. On a conceptual level, one may suppose that inequalities
(\ref{DefaultFactsWords}) are echoing to a certain extent inequalities
between mutual information and the common information functions by
G\'acs and K\"orner \cite{GacsKorner73} (less than mutual
information) and by Wyner \cite{Wyner75} (larger than mutual
information). The exact formulation of the theorems about facts and
words rests on a few particular modeling assumptions and allows for
some freedom with respect to the definition of $U(x_1^n)$ and
$V(x_1^n)$, cf.\ \cite{Debowski11b,Debowski18}.

Whereas $V(x_1^n)$ can be actually taken as a number of entities that
resemble orthographic words in the ordinary sense, cf.\
\cite{Debowski11b,DeMarcken96,KiefferYang00,CharikarOthers05}, we can
look for other possibilities. Following \cite{Debowski18} and
\cite{Debowski20c}, where $V(x_1^n)$ was taken as the number of
distinct substrings of the length of a Markov order estimator
(inconsistent in \cite{Debowski18} and strongly consistent in
\cite{Debowski20c}), we can suspect that one can also take $V(x_1^n)$
to be a hidden Markov order estimator. This is exactly the case. The
following reasoning squeezes to a triviality our earlier ideas for
proving the right inequality in (\ref{DefaultFactsWords}), cf.\
\cite{Debowski11b,Debowski18}.
\begin{theorem}
  \label{theoMIOrder}
  For a stationary process $(X_i)_{i\in\mathbb{Z}}$ over
  alphabet $\mathbb{X}$,
  \begin{align}
    \beta^P_3\le \beta^P_{\mathbb{M}}:=
    \hilberg_{n\to\infty} \sred \mathbb{M}(X_1^n).
  \end{align}
\end{theorem}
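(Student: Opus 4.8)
The plan is to route through the second-difference reformulation of $\beta^P_3$ already recorded in the second equality of (\ref{HilbergCompare}),
\[
\beta^P_3=\hilberg_{n\to\infty}\sred\okra{-\log\mathbb{P}(X_1^n)-\log\mathbb{P}(X_{n+1}^{2n})+\log\mathbb{P}(X_1^{2n})},
\]
so that it suffices to bound the expectation of the Ryabko-mixture analogue of mutual information, $\tilde{J}(x_1^{2n}):=\log\mathbb{P}(x_1^{2n})-\log\mathbb{P}(x_1^n)-\log\mathbb{P}(x_{n+1}^{2n})$, from above by a quantity whose Hilberg exponent is at most $\beta^P_{\mathbb{M}}$. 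The entire argument then collapses to one pointwise estimate of $\tilde{J}(x_1^{2n})$ in terms of $\mathbb{M}(x_1^{2n})$.

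I would establish that estimate by treating the three blocks asymmetrically. Abbreviate $k:=\mathbb{M}(x_1^{2n})$. For the full block of length $2n$ I would use the defining inequality of the estimator (\ref{Estimator}), namely $\hat{\mathbb{P}}(x_1^{2n}|k)\ge w_{2n}\mathbb{P}(x_1^{2n})$, to get $\log\mathbb{P}(x_1^{2n})\le\log\hat{\mathbb{P}}(x_1^{2n}|k)-\log w_{2n}$. For each half I would instead invoke the mixture lower bound from (\ref{SandwichRyabko}), $\mathbb{P}(x_1^n)\ge w_k\mathbb{P}(x_1^n|k)$, combined with $\mathbb{P}(x_1^n|k)=\hat{\mathbb{P}}(x_1^n|k)2^{-\mathbb{C}(n|k)}$, and similarly for $x_{n+1}^{2n}$. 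Summing the three contributions, the maximum-likelihood factors assemble into the ratio $\hat{\mathbb{P}}(x_1^{2n}|k)\big/\kwad{\hat{\mathbb{P}}(x_1^n|k)\hat{\mathbb{P}}(x_{n+1}^{2n}|k)}$, which is $\le 1$ by the subadditivity of the maximum log-likelihood (\ref{SubadditiveML}) with equal halves. What survives is the clean bound
\[
\tilde{J}(x_1^{2n})\le 2\mathbb{C}(n|k)-\log w_{2n}-2\log w_k,\qquad k=\mathbb{M}(x_1^{2n}).
\]

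To finish I would feed in the crude complexity bound (\ref{ComplexityBound}). Since $k=\mathbb{M}(x_1^{2n})\le 2n$, its logarithmic factor satisfies $\log\kwad{k(n+1)}=O(\log n)$ uniformly, so $\mathbb{C}(n|k)\le O(\log n)\cdot k$, while $-\log w_{2n}$ and $-\log w_k$ are both $O(\log n)$ deterministically. Taking expectations yields $\sred\tilde{J}(X_1^{2n})\le O(\log n)\,\sred\mathbb{M}(X_1^{2n})+O(\log n)$. In the Hilberg exponent the multiplicative $O(\log n)$ contributes $\log\log n/\log n\to 0$ and the additive $O(\log n)$ has exponent $0$, so both disappear, and restricting the $\limsup$ to the subsequence $2n$ cannot increase it; hence $\beta^P_3=\hilberg_{n\to\infty}\sred\tilde{J}(X_1^{2n})\le\hilberg_{n\to\infty}\sred\mathbb{M}(X_1^{2n})\le\beta^P_{\mathbb{M}}$.

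The step I expect to be the real obstacle is avoiding the circularity that wrecks the naive direct attempt: bounding $-\log\mathbb{P}(X_1^n)-nh^P$ straight through the estimator inequality merely reintroduces $-\log\mathbb{P}(X_1^n)$ on the right and degenerates to the vacuous $0\le-\log w_n+\ldots$. The second-difference form sidesteps this entirely, because the entropy-rate term is disposed of once and for all by Theorem~\ref{theoHilbergRedundancy}, while the three maximum-likelihood factors are tamed not by universality but by the subadditivity (\ref{SubadditiveML})---which was introduced in Section~\ref{secBounds} for precisely this purpose. The second delicate choice is to control the length-$2n$ term via the estimator (which produces $\mathbb{M}$) rather than via the normalized-maximum-likelihood maximizer $\mathbb{G}$ (which would produce $\mathbb{G}$ and an additional $-\mathbb{C}(2n|\mathbb{G})$ term); this is exactly what makes the target quantity $\mathbb{M}$, rather than $\mathbb{G}$, emerge.
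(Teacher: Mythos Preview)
Your proof is correct and follows essentially the same route as the paper's: set $k=\mathbb{M}(X_1^{2n})$, bound the two halves via the sandwich inequality (\ref{SandwichRyabko}) and the full block via the estimator's defining inequality, cancel the maximum-likelihood factors through subadditivity (\ref{SubadditiveML}), and finish with the complexity bound (\ref{ComplexityBound}). Your additional commentary on why the second-difference form is needed and why one must use the estimator rather than $\mathbb{G}$ on the length-$2n$ term makes the motivation more explicit than the paper does, but the argument itself is the same.
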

\begin{proof}
Denoting $k:=\mathbb{M}(X_1^{2n})\le 2n$, we observe by
(\ref{SandwichRyabko}) and (\ref{SubadditiveML}) that
\begin{align}
  &-\log \mathbb{P}(X_1^n)-\log \mathbb{P}(X_{n+1}^{2n})
    +\log \mathbb{P}(X_1^{2n})
    \nonumber\\
  &\quad\le    
    -\log \mathbb{P}(X_1^n|k)
    -\log \mathbb{P}(X_{n+1}^{2n}|k)
    -2\log w_k
    +\log \hat{\mathbb{P}}(X_1^{2n}|k)-\log w_{2n}
     \nonumber\\
  &\quad\le
    2\mathbb{C}(n|k)-2\log w_k-\log w_{2n}
    -\log \hat{\mathbb{P}}(X_1^n|k)
    -\log \hat{\mathbb{P}}(X_{n+1}^{2n}|k)
    +\log \hat{\mathbb{P}}(X_1^{2n}|k)
     \nonumber\\
  &\quad\le
    2\mathbb{C}(n|k)-2\log w_k-\log w_{2n}.
\end{align}
Hence by (\ref{ComplexityBound}) we obtain
\begin{align}
  \beta^P_3
  \le 
  \hilberg_{n\to\infty} \sred \mathbb{C}(n|\mathbb{M}(X_1^{2n}))
  \le
  \hilberg_{n\to\infty} \sred \mathbb{M}(X_1^n)
  .
\end{align}
\end{proof}
By Theorems \ref{theoUnbiased} and \ref{theoMIOrder}, for processes of
a finite unifilar order, we have not only $\beta^P_1=0$ but also
$\beta^P_2=\beta^P_3=\beta^P_{\mathbb{M}}=0$. Equality $\beta^P_2=0$
carries over also to non-unifilar finite-state hidden Markov processes
by the data-processing inequality for algorithmic mutual information.

Consequently, we can ask whether there exist processes such that
$\beta^P_2$, $\beta^P_3$, and $\beta^P_{\mathbb{M}}$ are arbitrarily 
close to $1$. The answer is given by a particular form of inequalities
(\ref{DefaultFactsWords}). We have the following proposition.
\begin{theorem}[\cite{Debowski18}]
  \label{theoFactsMI}
  Consider an arbitrary computable function
  $g:\mathbb{N}\times\mathbb{X}^*\to\klam{0,1,2}$ and an arbitrary fixed
  algorithmically random binary sequence $z=(z_k)_{k\in\mathbb{N}}$ (in
  the Martin-L\"of sense), i.e., we have $K(z_1^n)\ge n-c$ for a
  certain constant $c$. Define
 \begin{align}
  \label{Facts}
    U_g(x_1^n|z):=\min\klam{k\ge 1: g(k,x_1^n)\neq z_k}.
  \end{align}
  For a stationary process
  $(X_i)_{i\in\mathbb{Z}}$ over alphabet $\mathbb{X}$,
  \begin{align}
    \label{HilbergFacts}
    \beta^P_{g,z}:=
    \hilberg_{n\to\infty} \sred U_g(X_1^n|z)\le \beta^P_2.
  \end{align}
\end{theorem}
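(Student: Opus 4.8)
The plan is to bound the number of facts reproducible from a \emph{single} block $X_1^n$ by the algorithmic redundancy $\sred\kwad{K(X_1^n)-nh^P}$, whose Hilberg exponent is exactly $\beta^P_2$. Write $m:=U_g(X_1^n|z)$, so that $W:=z_1^{m-1}$ is the prefix of $z$ correctly predicted from $X_1^n$. Two observations about $W$ drive everything. First, since $g$ is computable and $z$ is fixed, the string $W$ is reproduced from $X_1^n$ once the stopping index $m$ is supplied, whence $K(W|(X_1^n)^*)=O(\log n)$ and therefore $K(X_1^n,W)=K(X_1^n)+O(\log n)$; note that $z$ being incomputable also guarantees $m<\infty$. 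Second, algorithmic randomness of $z$ gives $K(z_1^{m-1})\ge m-1-c$, so any upper bound on $K(W)=K(z_1^{m-1})$ is an upper bound on $m$.

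First I would invoke the Kolmogorov--Levin symmetry of information $K(X_1^n,W)=K(W)+K(X_1^n|W^*)+O(1)$ to obtain
\begin{align}
  K(z_1^{m-1})=K(X_1^n)-K(X_1^n|W^*)+O(\log n).
\end{align}
Taking expectations and using $K(z_1^{m-1})\ge m-1-c$, it remains to bound $\sred K(X_1^n|W^*)$ from below. For this I would appeal to the Kraft inequality: for each fixed value $w$ of $W$ the numbers $K(\cdot|w)$ are prefix-code lengths, so $\sred\kwad{K(X_1^n|W)}\ge H(X_1^n|W)$, and $K(X_1^n|W^*)\ge K(X_1^n|W)-O(1)$. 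Crucially, this inequality does not require $P$ to be computable, which matches the full generality of the statement for arbitrary stationary $(X_i)_{i\in\mathbb{Z}}$.

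The conceptual heart of the argument is that conditioning on $W$ costs almost nothing in Shannon terms. Because $z$ is a fixed sequence, $W=z_1^{m-1}$ is a bijective function of the count $m$, so $H(W)=H\okra{U_g(X_1^n|z)}$; and since the crude pointwise bound $U_g(X_1^n|z)\le K(X_1^n)+O(\log n)\le Cn$ shows that $U_g$ takes only $O(n)$ values, we get $H(W)=O(\log n)$. As $W$ is a function of $X_1^n$, this yields $H(X_1^n|W)=H(X_1^n)-H(W)\ge nh^P-O(\log n)$, using $H(X_1^n)\ge nh^P$. Chaining the estimates gives
\begin{align}
  \sred U_g(X_1^n|z)\le \sred\kwad{K(X_1^n)-nh^P}+O(\log n),
\end{align}
and passing to Hilberg exponents (the additive $O(\log n)$ has exponent $0$ and the right side is nonnegative) delivers $\beta^P_{g,z}\le\beta^P_2$.

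I expect the step most easily mishandled to be exactly this mismatch between the two ``sizes'' of the facts: the prefix $z_1^{m-1}$ has Kolmogorov complexity at least $m-1-c$, yet contributes only $H(U_g)=O(\log n)$ to the Shannon entropy of $X_1^n$, precisely because with $z$ fixed it is determined by its length. It is this gap that permits subtracting the linear term $nh^P$ and produces the bound $\beta^P_2$ rather than the trivial $\beta^P_{g,z}\le 1$; a naive single-block estimate bounding $m$ directly by $K(X_1^n)$ loses the subtraction and is therefore insufficient.
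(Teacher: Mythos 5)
Your proof is correct in substance, but it takes a genuinely different route from the one this paper relies on. The paper does not prove Theorem \ref{theoFactsMI} itself: it imports it from \cite{Debowski18}, where the argument is a two-block, common-information one --- the facts predicted from \emph{both} $X_1^n$ and $X_{n+1}^{2n}$ yield a string $z_1^s$ of complexity at least $s-c$ that is computable from either block given $O(\log n)$ bits of advice, whence $\sred J(X_1^n;X_{n+1}^{2n})$ dominates the expected number of common facts up to $O(\log n)$; this is precisely the G\'acs--K\"orner intuition the paper spells out after the theorem statement. You instead work with a single block and bound $\sred U_g(X_1^n|z)$ by the expected algorithmic redundancy $\sred\kwad{K(X_1^n)}-nh^P$, using Kolmogorov--Levin symmetry of information, the Kraft/Gibbs inequality $\sred K(X_1^n|W)\ge H(X_1^n|W)$, and the counting bound $H(W)=H\okra{U_g(X_1^n|z)}=O(\log n)$. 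This buys two things: you avoid the delicate step of converting $\sred\min\okra{U_g(X_1^n|z),U_g(X_{n+1}^{2n}|z)}$ back into $\sred U_g(X_1^n|z)$, and you target directly the redundancy form of $\beta^P_2$, whose equality with the mutual-information form is supplied in this paper by Theorem \ref{theoHilbergRedundancy} together with Brudno's theorem, so nothing is lost. Your closing diagnosis of the crux --- $K(z_1^{m-1})\ge m-1-c$ on the algorithmic side versus $H(W)=O(\log n)$ on the Shannon side, because $W$ is determined by its length once $z$ is fixed --- is exactly the right one and is what permits subtracting the linear term $nh^P$.

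One step is stated too strongly: $K(X_1^n|W^*)\ge K(X_1^n|W)-O(1)$ is not a valid inequality in general, since $W^*$ carries the extra information $K(W)$; recovering $W^*$ from $W$ costs roughly $K(K(W)\,|\,W)$, so the provable slack is $O(\log K(W))=O(\log n)$ rather than $O(1)$. This is harmless here, because your bookkeeping already tolerates additive $O(\log n)$ terms, and it can be avoided altogether: apply the Kraft inequality directly with $W^*$ as the condition. Since $W^*$ is itself a deterministic function of $X_1^n$ and is in bijection with $W$, one gets $\sred K(X_1^n|W^*)\ge H(X_1^n|W^*)=H(X_1^n)-H(W)\ge nh^P-O(\log n)$ with no comparison between $K(\cdot|W^*)$ and $K(\cdot|W)$ needed.
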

The guiding intuition is behind this result is that sequence
$z=(z_k)_{k\in\mathbb{N}}$ is the abstract pool of independent binary
``facts'' $z_k$ and $U_g(x_1^n|z)-1$ is the number of initial facts
that can be correctly predicted given text $x_1^n$ using an effective
prediction procedure. From this point of view, quantity
$U_g(x_1^n|z)-1$ can be seen as an approximation of the common
information by G\'acs and K\"orner which was demonstrated to be not
greater than the mutual information in \cite{GacsKorner73}.

Let us entertain this definition.
\begin{definition}[\cite{Debowski18}]
  A stationary process $(X_i)_{i\in\mathbb{Z}}$ is called perigraphic
  if inequality $\beta^P_{g,z}>0$ holds for Hilberg exponent
  (\ref{HilbergFacts}) given some computable function
  $g:\mathbb{N}\times\mathbb{X}\to\klam{0,1,2}$ and some
  algorithmically random binary sequence $z=(z_k)_{k\in\mathbb{N}}$.
\end{definition}
By Theorems \ref{theoUnbiased}, \ref{theoMIOrder}, and
\ref{theoFactsMI}, the class of finite-state hidden Markov processes
and the class of perigraphic processes are disjoint, which solves an
important open problem stated in the conclusion of book
\cite{Debowski21}.

The motivating idea of this open problem was that human language
serves for efficient description of a potentially unboundedly complex
reality in a repetitive way. If we equate the complex reality with an
infinite algorithmically random sequence, then a reasonable
statistical language model should be perigraphic. We wanted to show
that for that reason, a reasonable language model cannot be a
finite-state hidden Markov process, which is a different cause against
finite-state models than context-free syntax postulated by Chomsky
\cite{Chomsky56,Chomsky57,Chomsky59}.

As shown in \cite{Debowski18}, perigraphic processes have uncomputable
distributions and they look as typical ergodic components of some
strongly non-ergodic processes. In the next section, however, we will
show that some perigraphic processes lie quite low in the hierarchy of
stochastic processes since there are simple perigraphic unifilar
processes with a countably infinite number of hidden states, which we
call Oracle processes. In particular, Oracle processes do not exhibit
hierarchical context-free structures of an infinite depth. For
processes that exhibit context-free hierarchical structures of an
infinite depth and are possibly perigraphic (we have not proved it),
we refer to \cite{Debowski17,Debowski21}.

\section{Oracle processes}
\label{secOracle}

In this section we will construct some simple unifilar processes,
called Oracle processes, which are perigraphic. Necessarily, these
processes have a countably infinite number of hidden states. We will
show that for these processes
$\beta^P_{g,z}=\beta^P_2=\beta^P_3=\beta^P_{\mathbb{M}}=\beta$, where
$\beta$ is an arbitrary number in range $(0,1)$. That is, the bounds
given by Theorems \ref{theoMIOrder} and \ref{theoFactsMI} for the
power-law growth of the number of predictable facts, of algorithmic
mutual information, and of the unifilar order estimator are tight also
in this case. This tightness seems quite a new result in our little
theory of perigraphic processes.

To begin, let us recall that given a fixed algorithmically random
binary sequence $z=(z_k)_{k\in\mathbb{N}}$ such as the typical result
of unbiased coin-flip or the binary expansion of Chaitin constant
$\Omega=(\Omega_k)_{k\in\mathbb{N}}$ \cite{Chaitin75}, we can
construct a very simple example of a perigraphic process, called the
Santa Fe process \cite{Debowski09,Debowski18}. The Santa Fe process
$(X_i)_{i\in\mathbb{Z}}$ is a process over a countably infinite
alphabet and it consists of pairs
\begin{align}
  \label{SantaFe}
  X_i=(K_i,z_{K_i}),
\end{align}
where $K_i$ is an IID process taking values in natural numbers
according to Zipf's law $P(K_i=k)\propto k^{-\alpha}$, $\alpha>1$.  It
can be easily shown that in this case $\beta^P_{g,z}=1/\alpha$ for
quite an obvious choice of predictor $g(k,x_1^n)$ that simply reads
off value $z_k$ from pair $(k,z_k)$ if it appears in block $x_1^n$ and
returns $2$ otherwise \cite{Debowski18,Debowski21}. A similar
perigraphic process over a finite alphabet can be constructed through
stationary variable-length coding of the Santa Fe process
\cite{Debowski10,Debowski12,Debowski21} or more directly, as a
unifilar process with a countably infinite number of hidden states to
be introduced in this paper under the name of Oracle processes.

First, we will generalize the concept of a unifilar process to a
countable set of hidden states. In particular, the set of hidden
states can be exemplified as the set of natural numbers $\mathbb{N}$
or as the set of strings $\mathbb{X}^*$.
\begin{definition}
  A possibly non-stationary process $(X_i)_{i\in\mathbb{N}}$ over an
  alphabet $\mathbb{X}$ is called unifilar with respect to a process
  $(Y_i)_{i\in\mathbb{N}}$ if $(Y_i)_{i\in\mathbb{N}}$ is a
  homogeneous first order Markov process over a countable alphabet
  $\mathbb{Y}$ such that
\begin{enumerate}
\item $P(Y_1=y_1)=\pi(y_1)$, 
\item $P(X_i=x_i|Y_1^i=y_1^i,X_1^{i-1}=x_1^{i-1})=\varepsilon(x_i|y_i)$, 
\item $Y_{i+1}=\tau(Y_i,X_i)$
\end{enumerate}
for certain functions $\pi:\mathbb{Y}\to[0,1]$,
$\varepsilon:\mathbb{X}\times\mathbb{Y}\to[0,1]$, and
$\tau:\mathbb{Y}\times\mathbb{X}\to\mathbb{Y}$.  
\end{definition}

The following Oracle($\theta$) process is the Santa Fe process in
disguise. The idea is that we first emit some random string $y2$
uniquely representing the natural number $\phi(y)$ and then we emit
the corresponding bit $z_{\phi(y)}$ read off from the oracle being an
algorithmically random sequence $z=(z_k)_{k\in\mathbb{N}}$. Once this
bit is emitted, we repeat the procedure ad infinitum.
\begin{definition}
  Let $\psi:\mathbb{N}\to\klam{0,1}^*$ where $\psi(k)$ is the binary
  expansion of number $k$ stripped of the initial digit $1$:
  $\psi(1)=\lambda$, $\psi(2)=0$, $\psi(3)=1$, $\psi(4)=00$, ...  Let
  $\phi=\psi^{-1}$ be the inverse function. Let
  $z=(z_k)_{k\in\mathbb{N}}$ be an algorithmically random binary
  sequence.  Oracle($\theta$) process with parameter $\theta\in[0,1]$
  is the unifilar process defined by:
\begin{itemize}
\item $\mathbb{X}=\klam{0,1,2}$,
\item $\mathbb{Y}=\klam{a,b}\times\klam{0,1}^*$,
\item $\varepsilon(x|ay)=\theta/2$ and $\tau(ay,x)=ayx$ for
  $x\in\klam{0,1}$ and $y\in\klam{0,1}^*$,
\item $\varepsilon(2|ay)=(1-\theta)$ and $\tau(ay,2)=by$ for
  $y\in\klam{0,1}^*$,
\item $\varepsilon(z_{\phi(y)}|by)=1$ and $\tau(by,z_{\phi(y)})=a$ for
  $y\in\klam{0,1}^*$.
\end{itemize}  
\end{definition}
As we can see, realizations of Oracle($\theta$) process are recognized
by a deterministic pushdown automaton combined with an algorithmically
random oracle: First, the random output binary string $y$ is pushed on
the stack and upon producing symbol $2$, the stack is emptied with an
expectation of symbol $z_{\phi(y)}$ as a next output. Having met this
expectation, the stack is ready to be refilled. This observation
indicates that there may be some interactions between perigraphicness
and context-free languages or syntax if we assume a finite output
alphabet. It is a matter of future research to establish more precise
connections.

Since we have not discussed the Oracle processes before, as a warm-up,
let us compute the stationary distribution and the entropy rate of an
Oracle process.  There exists a simple expression for the entropy rate
of a unifilar process.
\begin{theorem}[\cite{TraversCrutchfield11,TraversCrutchfield11b,
    TraversCrutchfield14}]
  \label{theoUnifilar}
  The entropy rate of a process $(X_i)_{i\in\mathbb{N}}$ unifilar with
  respect to a stationary process $(Y_i)_{i\in\mathbb{N}}$ with
  entropy $H(Y_i)=-\sum_{y\in\mathbb{Y}}\pi(y)\log \pi(y)<\infty$
  equals
  \begin{align}
    \label{UnifilarEntropy}
    h^P=
    \sum_{y\in\mathbb{Y}}\pi(y)
    \kwad{-\sum_{x\in\mathbb{X}} \varepsilon(x|y)\log \varepsilon(x|y)}
    .
  \end{align}
\end{theorem}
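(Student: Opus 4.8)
The plan is to compute the entropy rate $h^P=\lim_{n\to\infty}H(X_1^n)/n$ by expanding $H(X_1^n)$ with the entropy chain rule and using the deterministic state-update rule $Y_{i+1}=\tau(Y_i,X_i)$ to collapse each conditional entropy of $X_i$ onto the single-step emission entropy that appears on the right-hand side of (\ref{UnifilarEntropy}). The finiteness hypothesis $H(Y_i)<\infty$ will be invoked only at the very end, to discard a bounded correction term after dividing by $n$.

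The central step is the identity $H(X_i\mid X_1^{i-1},Y_1)=H(X_i\mid Y_i)$, which I would obtain from unifilarity applied twice. Iterating property $Y_{i+1}=\tau(Y_i,X_i)$ shows that $Y_i$ is a deterministic function of $(Y_1,X_1^{i-1})$, so the $\sigma$-field generated by $(X_1^{i-1},Y_1)$ coincides with that generated by $(X_1^{i-1},Y_1^i)$, whence $H(X_i\mid X_1^{i-1},Y_1)=H(X_i\mid X_1^{i-1},Y_1^i)$. Then the emission property $P(X_i=x_i\mid Y_1^i,X_1^{i-1})=\varepsilon(x_i\mid y_i)$ says the conditional law of $X_i$ given $(X_1^{i-1},Y_1^i)$ depends on $Y_i$ alone; integrating the emission entropy of $\varepsilon(\cdot\mid Y_i)$ against the law of $Y_i$ yields $H(X_i\mid Y_i)$, where taking a conditional expectation also gives $P(X_i=x\mid Y_i=y)=\varepsilon(x\mid y)$. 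Since $(Y_i)$ is stationary we have $P(Y_i=y)=\pi(y)$ for every $i$, so this common value is exactly the right-hand side of (\ref{UnifilarEntropy}); write it as $h^*$.

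Next I would reassemble $H(X_1^n)$. Writing each summand of $H(X_1^n)=\sum_{i=1}^n H(X_i\mid X_1^{i-1})$ as $H(X_i\mid X_1^{i-1},Y_1)+I(X_i;Y_1\mid X_1^{i-1})$, substituting the identity above, and summing the conditional mutual informations by the chain rule $\sum_{i=1}^n I(X_i;Y_1\mid X_1^{i-1})=I(X_1^n;Y_1)$, I get the exact decomposition
\[
  H(X_1^n)=n\,h^*+I(X_1^n;Y_1).
\]
Because $I(X_1^n;Y_1)\le H(Y_1)=H(Y_i)<\infty$ by the stationarity of $(Y_i)$ and the hypothesis, dividing by $n$ and letting $n\to\infty$ gives $h^P=h^*$, which is (\ref{UnifilarEntropy}). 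Note this also shows the entropy-rate limit exists without separately arguing stationarity of $(X_i)$.

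The main obstacle I anticipate is the first step: rigorously justifying that $Y_i$ is measurable with respect to $(Y_1,X_1^{i-1})$ and that the emission property then reduces $H(X_i\mid X_1^{i-1},Y_1^i)$ to $H(X_i\mid Y_i)$ — in other words, the careful $\sigma$-field bookkeeping that converts unifilarity into the required conditional independence of $X_i$ from the past given $Y_i$. Everything downstream is routine once this identity is secured; in particular, the hypothesis $H(Y_i)<\infty$ enters only to keep the correction $I(X_1^n;Y_1)$ bounded so that it vanishes after normalization by $n$.
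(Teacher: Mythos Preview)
The paper does not supply its own proof of this theorem; it is stated with citations to Travers and Crutchfield and then used without further argument. Your proof is correct and is the standard one: the identity $H(X_i\mid X_1^{i-1},Y_1)=H(X_i\mid Y_i)$ follows exactly as you say from unifilarity (which makes $Y_i$ a function of $(Y_1,X_1^{i-1})$) together with the emission property, and the resulting decomposition $H(X_1^n)=nh^*+I(X_1^n;Y_1)$ with $0\le I(X_1^n;Y_1)\le H(Y_1)<\infty$ finishes the argument cleanly. There is nothing in the paper to compare against.
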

The entropy rate of a non-unifilar hidden Markov processs is much more
difficult to compute
\cite{Blackwell57,EphraimMerhav02,HanMarcus06,JacquetSeroussiSzpankowski08}.

Knowing that Oracle processes are unifilar, we state the following.
\begin{theorem}
  The entropy rate of the stationary Oracle($\theta$) process equals
  \begin{align}
    h^P
    &=\frac{h(\theta)+\theta}{2-\theta},
  \end{align}
  where $h(\theta):=-\theta \log\theta -(1-\theta)\log(1-\theta)$.
\end{theorem}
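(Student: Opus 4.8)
The plan is to apply Theorem \ref{theoUnifilar}, so the work reduces to two computations: the stationary distribution $\pi$ on the hidden states $\mathbb{Y}=\klam{a,b}\times\klam{0,1}^*$, and the per-state emission entropies $-\sum_{x}\varepsilon(x|y)\log\varepsilon(x|y)$. The emission entropies are immediate from the definition. At a state $by$ the emission is deterministic, $\varepsilon(z_{\phi(y)}|by)=1$, so its contribution is $0$; at a state $ay$ we emit $0$ and $1$ each with probability $\theta/2$ and the symbol $2$ with probability $1-\theta$, so the contribution is $-2\cdot(\theta/2)\log(\theta/2)-(1-\theta)\log(1-\theta)$. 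Using $\log 2=1$ for the binary logarithm, this simplifies to $h(\theta)+\theta$, independent of $y$. Hence only the $a$-states contribute to (\ref{UnifilarEntropy}), and the answer will take the form $h^P=[h(\theta)+\theta]\sum_{y}\pi(ay)$.

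It then remains to find the stationary distribution, in particular the total mass $\sum_y\pi(ay)$ on the $a$-states. I would set up the balance equations directly from the transition rule. The root state $a$ (empty string) is reached only from $b$-states, and every $b$-state flows into it, giving $\pi(a)=\sum_{y}\pi(by)$; a state $ay$ with $y=y'x$, $x\in\klam{0,1}$, is reached only from $ay'$ with probability $\theta/2$, giving $\pi(ay)=(\theta/2)\pi(ay')$; and a state $by$ is reached only from $ay$ with probability $1-\theta$, giving $\pi(by)=(1-\theta)\pi(ay)$. Unwinding the middle recursion along the binary tree yields $\pi(ay)=(\theta/2)^{\abs{y}}\pi(a)$, and hence $\pi(by)=(1-\theta)(\theta/2)^{\abs{y}}\pi(a)$.

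The required sums are then geometric. Using $\sum_{y\in\klam{0,1}^*}(\theta/2)^{\abs{y}}=\sum_{\ell=0}^{\infty}2^\ell(\theta/2)^\ell=1/(1-\theta)$ for $\theta<1$, I obtain $\sum_y\pi(ay)=\pi(a)/(1-\theta)$ and $\sum_y\pi(by)=\pi(a)$; the latter also confirms the balance equation at the root. Normalization $\sum_y[\pi(ay)+\pi(by)]=1$ then fixes $\pi(a)=(1-\theta)/(2-\theta)$, whence $\sum_y\pi(ay)=1/(2-\theta)$. Substituting into $h^P=[h(\theta)+\theta]\sum_y\pi(ay)$ gives the claimed formula.

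The main point to watch is convergence, which is where most of the care goes. The computation requires $\theta<1$ both for the geometric series to sum and for the hypothesis $H(Y_i)=-\sum_y\pi(y)\log\pi(y)<\infty$ of Theorem \ref{theoUnifilar} to hold; since $\pi(ay)$ decays like $\theta^{\abs{y}}$ while there are $2^\ell$ strings of length $\ell$, the series for $H(Y_i)$ behaves like $\sum_\ell\theta^\ell\,O(\ell)$ and converges. The boundary value $\theta=1$ must be handled separately, as the chain is then transient and admits no stationary law, though the formula extends by continuity to the entropy rate $1$ of i.i.d.\ uniform bits; the value $\theta=0$ gives a deterministic period-two process with $h^P=0$, again consistent with the formula.
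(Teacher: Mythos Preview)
Your proof is correct and follows essentially the same route as the paper: determine the stationary distribution $\pi$, check $H(Y_i)<\infty$, and plug the emission entropies into Theorem~\ref{theoUnifilar}. The only cosmetic difference is that the paper computes $H(Y_i)$ explicitly rather than arguing its finiteness by a geometric bound, and it does not discuss the degenerate endpoints $\theta\in\klam{0,1}$ that you flag.
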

\begin{proof}
  Using equation (\ref{Stationary}), we can easily determine the
  stationary distribution as
  $\pi(ay)=\pi(a)\okra{\frac{\theta}{2}}^{\abs{y}}$,
  $\pi(by)=\pi(ay)(1-\theta)$, and $\pi(a)=(1-\theta)/(2-\theta)$.
  Hence
  \begin{align}
    H(Y_i)
    &=-\sum_{y\in\klam{a,b}\times\klam{0,1}^*}\pi(y)\log \pi(y)
      \nonumber\\
    &=\sum_{y\in\klam{0,1}^*}
    \kwad{-(2-\theta)\pi(ay)\log \pi(ay)-(1-\theta)\pi(ay)\log (1-\theta)}
      \nonumber\\
    &=\frac{1-\theta}{2-\theta}\sum_{k=0}^\infty
      \kwad{-(2-\theta)\theta^k k\log\frac{\theta(1-\theta)}{2(2-\theta)}
      -(1-\theta)\theta^k\log (1-\theta)}
      \nonumber\\
    &=\frac{(1-\theta)}{(2-\theta)}
      \kwad{-\frac{(2-\theta)\theta}{(1-\theta)^2}
      \log\frac{\theta(1-\theta)}{2(2-\theta)}-
      \log (1-\theta)}
      \nonumber\\
    &=\frac{
      (2-\theta)\theta
      \kwad{-\log\theta+1+\log(2-\theta)}
      -\log (1-\theta)
      }{
      (1-\theta)(2-\theta)
      }
      .
  \end{align}
  Since this entropy is finite, we can compute the entropy rate by
  Theorem \ref{theoUnifilar} as
  \begin{align}
    h^P
    &=\sum_{y\in\klam{0,1}^*}
      \pi(ay)\kwad{-\theta\log\frac{\theta}{2}-(1-\theta)\log(1-\theta)}
      \nonumber\\
    &=\pi(a)\sum_{n=0}^\infty \theta^n\kwad{h(\theta)+\theta}
      =\frac{h(\theta)+\theta}{2-\theta}.
  \end{align}
\end{proof}

Now let us proceed to the main result of this section, i.e., computing
Hilberg exponents $\beta^P_{g,z}$, $\beta^P_2$, $\beta^P_3$, and
$\beta^P_{\mathbb{M}}$ for Oracle processes and showing that they are
equal and can take arbitrary values in range $(0,1)$.  To determine
Hilberg exponent $\beta_{g,z}$, we will use quite an obvious predictor
\begin{align}
    \label{SantaFePredictorEncoded}
     g(k,x_1^n)
    :=
    \begin{cases}
      0 & \text{if $2\_\psi(k)20\sqsubseteq x_1^n$ 
        and $2\_\psi(k)21\not\sqsubseteq x_1^n$}, \\
      1 & \text{if $2\_\psi(k)21\sqsubseteq x_1^n$ 
        and $2\_\psi(k)20\not\sqsubseteq x_1^n$}, \\
      2 & \text{else}.
    \end{cases} 
\end{align}
In the above definition, symbol `$\_$' matches any symbol.
\begin{theorem}
  For predictor (\ref{SantaFePredictorEncoded}) and the stationary
  Oracle($\theta$) process, 
  \begin{align}
    \beta^P_{g,z}=\beta^P_2=\beta^P_3=\beta^P_{\mathbb{M}}
    &=\beta:=\frac{1}{1-\log\theta}.
  \end{align}
\end{theorem}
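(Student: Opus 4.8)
The plan is to establish the four-way equality by a squeeze against the chain of inequalities already in hand. From Theorem~\ref{theoFactsMI}, the bounds collected in Section~\ref{secMI}, and Theorem~\ref{theoMIOrder} we have $\beta^P_{g,z}\le\beta^P_2\le\beta^P_3\le\beta^P_{\mathbb{M}}$, so it suffices to prove the two outer bounds $\beta^P_{g,z}\ge\beta$ and $\beta^P_{\mathbb{M}}\le\beta$; together with the chain these pin all four exponents to $\beta$. The structural fact underlying everything, to be checked first, is that a stationary Oracle($\theta$) trajectory decomposes into renewal cycles $y\,2\,z_{\phi(y)}$ in which the payload $y\in\klam{0,1}^*$ is an i.i.d.\ draw with $P(y)=(1-\theta)(\theta/2)^{\abs{y}}$, so the emitted integer $K=\phi(y)$ obeys $P(K=k)=(1-\theta)(\theta/2)^{\abs{\psi(k)}}\asymp k^{-\alpha}$ with $\alpha:=1-\log\theta=1/\beta$. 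The number $N_n$ of complete cycles inside $X_1^n$ is $\Theta(n)$ almost surely and concentrates, by the ergodic/renewal theorem. Thus the Oracle process is a Santa Fe process of exponent $\alpha$ in disguise, which is what fixes $\beta=1/\alpha$.

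For the lower bound $\beta^P_{g,z}\ge\beta$ I would argue that predictor (\ref{SantaFePredictorEncoded}) recovers $z_k$ correctly whenever the integer $k$ occurs as a payload: the symbol $2$ appears only as a cycle separator, so once $y=\psi(k)$ is emitted the block $2\_\psi(k)2z_k$ occurs and its competitor never does. Hence $U_g(X_1^n|z)-1$ dominates the largest $L$ for which all of $1,\dots,L$ have appeared among the $N_n$ payloads. A first-moment coupon argument finishes: taking $L=\ceil{n^{\beta-\varepsilon}}$ gives $N_n p_L\asymp n^{\varepsilon/\beta}\to\infty$, so $\sum_{k\le L}(1-p_k)^{N_n}\le L\,e^{-cn^{\varepsilon/\beta}}\to0$, whence $P\okra{U_g(X_1^n|z)>L}\to1$ and $\sred U_g(X_1^n|z)\ge n^{\beta-\varepsilon}(1-o(1))$; letting $\varepsilon\downarrow0$ yields $\beta^P_{g,z}\ge\beta$. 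This is essentially the known Santa Fe computation.

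The real content is the upper bound $\beta^P_{\mathbb{M}}\le\beta$. The key observation is that the hidden states the true process actually visits while emitting $X_1^n$ — the prefixes of the observed payloads (the $a$-states) together with the complete observed payloads (the $b$-states) — form a genuine finite unifilar model that reproduces the exact conditional law $P(X_1^n\mid Y_1)$. Writing $k_n^\ast$ for the number of these visited states, I would show $k_n^\ast\le n^{\beta+o(1)}$ with overwhelming probability: the distinct observed integers number $\Theta(N_n^{1/\alpha})=n^{\beta+o(1)}$ by a Herdan--Heaps estimate for the Zipfian draw, and since the frequently seen integers are exactly the short strings, their prefix trie fills a near-complete binary tree whose node count has the same order $n^{\beta+o(1)}$. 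Now fix $k:=\ceil{n^{\beta+\varepsilon}}$. On the event $E_n:=\klam{k_n^\ast\le k}$, monotonicity of the maximum likelihood in the number of states gives $\hat{\mathbb{P}}(X_1^n|k)\ge\hat{\mathbb{P}}(X_1^n|k_n^\ast)\ge P(X_1^n|Y_1)$, exactly as in the overestimation bound of Theorem~\ref{theoConsistent} but with the finite visited-state model replacing the (here infinite) order $M^P$. The same Barron-lemma/Markov estimate then gives $P\okra{P(X_1^n|Y_1)<w_n\mathbb{P}(X_1^n)}\le w_n$, so $P(\mathbb{M}(X_1^n)>k)\le P(E_n^c)+w_n$, and using the deterministic bound $\mathbb{M}(X_1^n)\le n$ to control the tail,
\begin{align}
  \sred\mathbb{M}(X_1^n)\le \ceil{n^{\beta+\varepsilon}}+n\okra{P(E_n^c)+w_n}=n^{\beta+\varepsilon}(1+o(1)),
\end{align}
giving $\beta^P_{\mathbb{M}}\le\beta+\varepsilon$, and $\varepsilon\downarrow0$ finishes.

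The step I expect to be the main obstacle is the concentration $P(E_n^c)=o(1/n)$ for the number of visited states, because the crude $o(1)$ coming from the mean does not kill the factor $n$ in the tail term of the expectation bound. This calls either for a genuine large-deviation estimate for the distinct-payload count of the Zipfian draw, or, more cheaply, for truncating at payload length $\Theta(\log n)$ and bounding the summably small probability that any cycle overflows that length. The accompanying Herdan--Heaps trie estimate is routine once the renewal/i.i.d.\ structure and $P(K=k)\asymp k^{-\alpha}$ are in place.
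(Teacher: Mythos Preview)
Your approach is essentially the paper's: the same chain reduction to the two outer bounds, the same renewal decomposition into i.i.d.\ payloads $W_i$ with $P(W_i=y)=(1-\theta)(\theta/2)^{\abs{y}}$, the same coupon-collector lower bound for $\beta^P_{g,z}$, and the same idea for the upper bound---build a finite unifilar model from the hidden states actually visited and invoke the Barron inequality exactly as in the overestimation half of Theorem~\ref{theoConsistent}.

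The one substantive technical difference is how the upper bound is executed, and it dissolves precisely the obstacle you flag. You fix a deterministic threshold $k=\ceil{n^{\beta+\varepsilon}}$ and then need the concentration $P(k_n^\ast>k)=o(1/n)$. The paper instead keeps the state-count bound \emph{random}: indexing by cycle count (so $N_n$ is the string \emph{length} after $n$ cycles), it sets
\[
  M_n:=\abs{W_0}+2+\sum_{y\in\klam{0,1}^*}(\abs{y}+2)\,\boole{y\in\klam{W_i}_{i=1}^n},
\]
observes that $\hat{\mathbb{P}}(X_1^{N_n}\mid M_n)\ge P(X_1^{N_n}\mid Y_1)$ holds pathwise, and so the Barron bound gives $P\okra{\mathbb{M}(X_1^{N_n})>M_n}\le\sred w_{N_n}$. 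Then no concentration of $M_n$ is needed at all---only the expectation $\sred M_n$, which is computed directly via an explicit integral (substitutions $p=2^k$, $u=(1-p^{-1/\beta})^n$) yielding $\sred M_n\lesssim n^\beta\log n$. Exponential Hoeffding tails for $N_n=\sum_i(\abs{W_i}+2)$ convert between cycle-count and string-length indexing. Your truncation fix at payload length $\Theta(\log n)$ would also work, but taking the expectation of a random bound is cleaner and never requires a high-probability estimate on the distinct-payload count.
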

\begin{proof}
  By Theorems \ref{theoMIOrder} and \ref{theoFactsMI}, it suffices to
  show $\beta_{g,z}\ge \beta$ and $\beta^P_{\mathbb{M}}\le \beta$. The
  proof of $\beta_{g,z}\ge \beta$ will apply techniques developed in
  \cite{Debowski18} for Santa Fe processes.  The proof of
  $\beta^P_{\mathbb{M}}\le \beta$ will use some ideas from
  \cite{Debowski12} derived also for Santa Fe processes. For both
  goals of the proof, we will apply random variables $N_n\ge 0$,
  $Z_i\in\klam{0,1}$, and $W_i\in\klam{0,1}^*$ constructed through
  parsing
  \begin{align}
    X_1^{N_n}=W_02Z_0W_12Z_1W_22Z_2...W_{n}2Z_{n}.
  \end{align}
  Obviously $Z_i=z_{\phi(W_i)}$ for the Oracle($\theta$) process.  In
  contrast, by the strong Markov property, random variables
  $(W_i)_{i\in\mathbb{N}}$ form an IID process, where
  \begin{align}
    P(W_i=y)=(1-\theta)\okra{\frac{\theta}{2}}^{\abs{y}}.
  \end{align}

  Since $N_n=\sum_{i=0}^n(\abs{W_i}+2)$ and $(W_i)_{i\in\mathbb{N}}$
  is an IID process then by the Hoeff\-ding inequality
  \cite{Hoeffding63} probabilities $P(N_n<\floor{\alpha n})$ and
  $P(N_n>\floor{\alpha n})$ vanish exponentially fast for $\alpha$'s
  respectively less or greater than $\sred\abs{W_i}+2$.  We can
  evaluate for $i\ge 1$ that
  \begin{align}
    \sred\abs{W_i}=(1-\theta)\sum_{y\in\klam{0,1}^*} \abs{y}
    \okra{\frac{\theta}{2}}^{\abs{y}}=(1-\theta)\sum_{n=0}^n n
    \theta^n=\frac{\theta}{1-\theta}
    .
  \end{align}
  Hence $n\le\frac{\sred N_n}{\sred\abs{W_i}+2}\le (n+1)$.  Observe now that
  $U_g(X_1^{N_n}|z)\le n$. Thus
  \begin{align}
    \sred  U_g(X_1^{N_n}|z)\le
    \sred U_g(X_1^{\floor{\alpha n}}|z)+nP(N_n>\floor{\alpha n}).
  \end{align}
  Similarly, $\mathbb{M}(X_1^{\floor{\alpha n}})\le \floor{\alpha n}$. Thus
  \begin{align}
    \sred  \mathbb{M}(X_1^{\floor{\alpha n}})\le
    \sred \mathbb{M}(X_1^{N_n})+\floor{\alpha n}P(N_n<\floor{\alpha n}).
  \end{align}
  Hence we obtain
  \begin{align}
    \hilberg_{n\to\infty} \sred U_g(X_1^{N_n}|z)
    &\le
      \hilberg_{n\to\infty} \sred U_g(X_1^n|z)
      =\beta^P_{g,z},
    \\
    \beta^P_{\mathbb{M}}=
    \hilberg_{n\to\infty} \sred \mathbb{M}(X_1^n)
    &\le
      \hilberg_{n\to\infty} \sred \mathbb{M}(X_1^{N_n}).
  \end{align}

  Define now
  \begin{align}
    U_n&:=\min\klam{k\ge 1: \psi(k)\not\in \klam{W_i}_{i=1}^n}
         .
    \\
    M_n&:=\abs{W_0}+2+ \sum_{y\in\klam{0,1}^*} (\abs{y}+2) 
         \boole{y\in \klam{W_i}_{i=1}^n}
         .
  \end{align}
  We see that $U_g(X_1^{N_n}|z)=U_n$. Since $U_n$ is a non-decreasing
  function of $n$, we have
  \begin{align}
    \hilberg_{n\to\infty} U_n\le \hilberg_{n\to\infty} \sred U_n
    \le \beta^P_{g,z}
    \text{ $P$-almost surely}.
  \end{align}
  by Theorem A9 from \cite{Debowski18}, see also
  \cite{Debowski15d,Debowski21}. On the other hand, we see that
  $\hat{\mathbb{P}}(X_1^{N_n}|M_n)\ge P(X_1^{N_n}|Y_1)$, where $Y_1$
  is the hidden state emitting $X_1$. It is so since we can express
  probability $P(X_1^{N_n}|Y_1)$ as probability of a unifilar process
  with $M_n$ hidden states. Thus, by the Barron lemma \cite[Theorem
  3.1]{Barron85b}, we obtain
  \begin{align}
    P\okra{\mathbb{M}(X_1^{N_n})>M_n}
    &\le
      P\okra{\hat{\mathbb{P}}(X_1^{N_n}|M_n)< w_{N_n}\mathbb{P}(X_1^{N_n})}
      \nonumber\\
    &\le
      P\okra{\frac{w_{N_n}\mathbb{P}(X_1^{N_n})}{P(X_1^{N_n}|Y_1)}>1}
      \le
      \sred w_{N_n}.
  \end{align}
  In consequence, by the Hoeffding bound for $N_n$, we obtain
  \begin{align}
    \beta^P_{\mathbb{M}}\le
    \hilberg_{n\to\infty} \sred\mathbb{M}(X_1^{N_n})\le
    \hilberg_{n\to\infty} \sred M_n.
  \end{align}
  To finish the proof, it suffices to show
  \begin{align}
    \label{Goal}
    \hilberg_{n\to\infty} U_n\ge \beta\ge
    \hilberg_{n\to\infty} \sred M_n \text{ $P$-almost surely}.
  \end{align}

  To accomplish the left inequality in (\ref{Goal}), we observe
  \begin{align}
    P(U_n< 2^m)
    &\le \sum_{k=1}^{2^m-1} P(\psi(k)\not\in \klam{W_i}_{i=1}^n)
      = \sum_{y\in\klam{0,1}^{< m}} P(y\not\in \klam{W_i}_{i=1}^n)
      \nonumber\\
    &= \sum_{k=0}^{m-1} 2^k
      \okra{1-(1-\theta)\okra{\frac{\theta}{2}}^k}^n
      \le 2^m \okra{1-(1-\theta)\okra{\frac{\theta}{2}}^m}^n
     \nonumber\\
    &\le 2^m\exp\okra{-(1-\theta)n\okra{\frac{\theta}{2}}^m}
      = 2^m\exp\okra{-(1-\theta)n2^{-m/\beta}}.
  \end{align}
  Putting $m_n=\beta(1-\epsilon)\log n$ for an arbitrary $\epsilon>0$,
  we obtain
  \begin{align}
    \sum_{n=1}^\infty P(U_n< 2^{m_n})\le \sum_{n=1}^\infty
    n^{\beta(1-\epsilon)}\exp(-(1-\theta)n^\epsilon)<\infty.
  \end{align}
  Hence by the Borel-Cantelli lemma
  \begin{align}
    \beta\le \hilberg_{n\to\infty} U_n  \text{ $P$-almost surely}.
  \end{align}
  
  To accomplish the right inequality in (\ref{Goal}), we notice
  \begin{align}
    \sred M_n-\frac{2-\theta}{1-\theta}
    &= \sum_{y\in\klam{0,1}^*} (\abs{y}+2) P(y\in \klam{W_i}_{i=1}^n) 
      \nonumber\\
    &= \sum_{k=0}^{\infty} (k+2) 2^k
      \okra{1-\okra{1-(1-\theta)\okra{\frac{\theta}{2}}^k}^n}
      \nonumber\\
    &\le \sum_{k=0}^{\infty} (k+2) 2^k
      \okra{1-\okra{1-2^{-k/\beta}}^n}.
  \end{align}
  Hence, adapting the computations from the proof of Proposition 1 by
  \cite{Debowski12}, we obtain up to a small constant 
  \begin{align}
    \sred M_n
    &
      \lesssim \int_{0}^\infty
      (k+2) 2^k \okra{1-\okra{1-2^{-k/\beta}}^n} dk
      \nonumber\\
    &
      =\frac{1}{\ln 2}\int_{1}^\infty
      (\log p+2) \okra{1-\okra{1-p^{-1/\beta}}^n}
      dp \quad \klam{p:=2^k}
      \nonumber\\
    &
      =\frac{\beta^2}{\ln 2}
      \int_0^1
      \frac{(1-u)(\log(1-u^{1/n})^{-1}+2)du}{u^{1-1/n}n(1-u^{1/n})^{\beta+1}}
      \quad
      \klam{u:=\okra{1-p^{-1/\beta}}^n}
       \nonumber\\
    &
      =\frac{\beta^2n^\beta(\log n+2)}{\ln 2} \int_0^1 f_n(u) du
      + \frac{\beta^2n^\beta}{\ln 2} \int_0^1 g_n(u) du,
  \end{align}
  where we denote functions
  \begin{align}
    \label{fnu}
    f_n(u):=\frac{(1-u)}{u^{1-1/n}[n(1-u^{1/n})]^{\beta+1}}
    ,
    \quad
    g_n(u):=f_n(u)\log[n(1-u^{1/n})]^{-1}
    .
  \end{align}
  These functions tend to limits
  \begin{align}
    \lim_{n\rightarrow\infty} f_n(u) =
    f(u):=\frac{(1-u)}{u(-\ln u)^{\beta+1}},
    \quad
    \lim_{n\rightarrow\infty} g_n(u) =
    g(u):=f(u)\log(-\ln u)^{-1}.
  \end{align}
  We notice upper bounds $f_n(u)\le f(u)$ and $g_n(u)\le g_1(u)$ for
  $u\in(0,1)$.  Moreover, functions $f(u)$ and $g_1(u)$ are integrable
  on $u\in(0,1)$. Indeed putting $t:=-\ln u$ and integrating by parts
  yields
  \begin{align}
    &\int_0^1  f(u) du
      =
      \int_0^\infty (1-e^{-t})\,t^{-\beta-1} dt
      \nonumber\\
    &
      \quad
      =(1-e^{-t})(-\beta^{-1})t^{-\beta}|_0^\infty 
      +\int_0^\infty e^{-t}\beta^{-1}t^{-\beta}dt
      =\beta^{-1}\Gamma(1-\beta),
  \end{align}
  whereas putting $t=1-u$ and integrating by parts yields
  \begin{align}
    &\int_0^1  g_1(u) du
      =
      -\int_0^1 \frac{\log t}{t^\beta}dt
      \nonumber\\
    &
      \quad
      =-(\log t)(1-\beta)^{-1}t^{1-\beta}|_0^1
      +\int_0^1 (1-\beta)^{-1}t^{-\beta}dt
      =(1-\beta)^{-2}.
   \end{align}
  Hence we derive
  \begin{align}
    \hilberg_{n\to\infty} \sred M_n\le \beta. 
  \end{align}
  This completes the proof.
\end{proof}

As we can see by the above theorem, Oracle processes can have
arbitrary large Hilberg exponents
$\beta^P_{g,z}=\beta^P_2=\beta^P_3=\beta^P_{\mathbb{M}}\in(0,1)$. In
particular, the hidden Markov order estimator can diverge as a power
law even for so simple unifilar processes and it diverges at the
slowest possible rate prescribed by the bounds in Theorems
\ref{theoMIOrder} and \ref{theoFactsMI}. That is, these bounds can be
non-trivially tight.

\section*{Acknowledgment}

We thank Jan Mielniczuk, Damian Niwi\'nski, Elisabeth Gassiat, and
Peter Gr\"unwald for inspiring feedback. This work was supported by
the National Science Centre Poland grant no.\ 2018/31/B/HS1/04018.

\bibliographystyle{IEEEtran}

\bibliography{0-journals-abbrv,0-publishers-abbrv,ai,mine,tcs,ql,books,nlp}

\end{document}